\renewcommand{\phi}{\varphi}
\newcommand\+{\;\lower\plusheight\hbox{$+$}\;}
\newcommand\lldots{\;\lower\plusheight\hbox{$\cdots$}\;}
\newcommand{\ud}{\mathrm{d}}
\newcommand{\vect}[1]{\boldsymbol{#1}}
\newtheorem{Theorem}{Theorem}[section]
\newdimen\plusheight
\newdimen\minusheight
\newdimen\cdotsheight
\begin{document}
\title[]{The Cuboidal Lattices and their Lattice sums}

\author{Antony Burrows}
\address{Centre for Theoretical Chemistry and Physics, The New Zealand Institute for Advanced Study (NZIAS), Massey University Albany, Private Bag 102904, Auckland 0745, New Zealand}

\author{Shaun Cooper}
\address{School of Natural and Computational Sciences \\
Massey University--Albany \\
Private Bag 102904, North Shore Mail Centre\\
Auckland, New Zealand.\\
E-mail: s.cooper@massey.ac.nz 
}

\author{Peter Schwerdtfeger}
\address{Centre for Theoretical Chemistry and Physics, The New Zealand Institute for Advanced Study (NZIAS), Massey University Albany, Private Bag 102904, Auckland 0745, New Zealand}

\subjclass[2000]{Primary 11E45. Secondary  33E20, 40A25, 65B10}

\keywords{
axial centred cuboidal lattice,
body centred cubic,
cuboidal lattice,
face centred cubic,
hexagonal close packing,
lattice sum,
Madelung constant,
mean centred cuboidal lattice,
modified Bessel function,
sphere packing,
theta function.
}

\begin{abstract}
Lattice sums of cuboidal lattices, which connect the face-centered with the mean-centered and the body-centered cubic lattices through parameter dependent lattice vectors, are evaluated by decomposing them into two separate lattice sums related to a scaled cubic lattice and a scaled Madelung constant. Using theta functions we were able to derive fast converging series in terms of Bessel functions. Analytical continuations of these lattice sums are discussed in detail. 
\end{abstract}
\maketitle
\allowdisplaybreaks
\numberwithin{equation}{section}

\section{Introduction}

Lattice sums have a long history in solid-state physics and discrete mathematics \cite{BorweinEtAl}. They connect lattices to observables
such as the equation of state for a bulk system using interacting potentials between the lattice points (atoms or molecules) 
in three-dimensional space \cite{furth1944,Gruneisen1912,Smits2021,Stillinger2001}. Most notable cases for such interactions are the 
Lennard-Jones \cite{Jones-1925} and the Coulomb potential, leading in the latter case, for example, to the famous Madelung 
constant derived as early as in 1918 \cite{madelung1918}. For such potentials the corresponding lattice sums become functions of 
quadratic forms $\vect{i}^{\top}G\vect{i}$ with $\vect{i}\in\mathbb{Z}^3$ \cite{paper1}, i.e. the expression $\vect{i}^{\top}G\vect{i}$ 
is the quadratic form associated with the lattice $\mathcal{L}$ (or simply, the associated quadratic form).

In the general case of a $n$-dimensional lattice ($\vect{i}\in\mathbb{Z}^n$), $G$ is a positive definite, 
real and symmetric $(n\times n)$ matrix called the Gram matrix of the lattice $\mathcal{L}$, defined by its basis (or lattice) 
vectors $\{\vect{b}_i\}$ through $G=BB^\top$. $B=(\vect{b}_1,...,\vect{b}_n)^\top$ is called 
the generator matrix ($B$ not necessarily positive definite). Lattice sums represent often conditionally 
convergent series \cite{borwein1998convergence}, and the theory of converting them into fast converging 
series has become an intense research field over the past 50 years \cite{BorweinEtAl}.

Concerning the Gram matrix $G$ or generator matrix we introduce a few important definitions required in this work \cite{ConwayBook}. Two generator matrices $B_1$ and $B_2$ are equivalent if $B_2=cUB_1\mathcal{O}$, $c$ a non-zero real number, $\mathcal{O}$ a real orthogonal matrix ($\mathcal{O}\mathcal{O}^\top=1$) with $\textrm{det}(\mathcal{O})=\pm 1$ describing rotation, reflection or rotoreflection of the lattice, and $U$ a matrix containing integers with $\textrm{det}U=1$ describing for example permutations of the basis vectors. Given two equivalent generator matrices $B_1$ and $B_2$, the corresponding Gram matrices are related by
$$
G_2 = B_2 B_2^\top = cUB_1\mathcal{O} \left(cUB_1\mathcal{O}\right)^\top = c^2UB_1\mathcal{O}\mathcal{O}^\top B_1^\top U^\top 
= c^2U G_1 U^\top.
$$
\noindent
The minimum distance $d_\textrm{min}$ in a lattice $\mathcal{L}$ is defined by 
$$
d_\textrm{min}(\mathcal{L})=\textrm{min}\{d(\vect{v}_1;\vect{v}_2) | \vect{v}_1, \vect{v}_2 \in \mathcal{L}; \vect{v}_1\ne \vect{v}_2\}
$$
where $d(\vect{v}_1;\vect{v}_2) = |\vect{v}_1-\vect{v}_2|$ is the Euclidean distance.
In terms of the Gram matrix this is equivalent to 
$$
d_\textrm{min}=\textrm{min}\{+\sqrt{\vect{i}^{\top}G\vect{i}} ~|~ \vect{i}\in \mathbb{Z}^3\backslash (0,0,0)^\top \}.
$$
The minimal norm is related to the minimum distance by $\mu=d_\textrm{min}^2$. Dividing $G$ by $\mu$ assures that $d_\textrm{min}=1$ used in most lattice sum applications \cite{paper1}. For dense sphere packings the radius of a sphere $\rho$ is simply $\rho=d_\textrm{min}/2$. The packing density $\Delta_L$ and the center density $\delta_L$ of a three-dimensional lattice are given by 
$$
\Delta_\mathcal{L}=\frac{4\pi}{3}\delta_\mathcal{L}=\frac{4\pi}{3}\frac{\rho^3}{\textrm{vol}(\mathcal{L})}=\frac{4\pi}{3}\frac{\rho^3}{|\textrm{det}(B)|}=\frac{4\pi}{3}\frac{\rho^3}{\sqrt{\textrm{det}(G)}}. 
$$
The kissing number for dense sphere packings is defined by 
$$\textrm{kiss}(\mathcal{L})=\#\{\vect{v}\in \mathcal{L} ~|~ |\vect{v}|=d_\textrm{min}(\mathcal{L})\}.
$$

In this work we discuss cuboidal lattices and their lattice sums. We first present the characteristics of cuboidal lattices $\mathcal{L}(A)$ dependent on a single parameter $A$. In what follows we decompose the corresponding lattice sum into two lattice sums, where one is related to a scaled cubic lattice and the other to a scaled Madelung constant. We evaluate these lattice sums in two ways using  theta functions. We discuss these lattice sums including their analytical continuations and provide a more complete analysis for the lattice sum difference between f.c.c. (face centred cubic) and h.c.p. (hexagonal close packing).


\section{The cuboidal lattices}

Following Conway and Sloane~\cite[Sec. 3]{duals} we consider the lattice generated by the vectors
$(\pm u, \pm v, 0)^\top\quad\text{and}\quad (0,\pm v, \pm v)^\top$, where $u$ and $v$ are non-zero real numbers.
To make it specific, take the basis vectors
\begin{equation}
\vect{b}_1 = (u,v,0)^\top, \quad
\vect{b}_2 = (u,0,v)^\top, \quad
\vect{b}_3 = (0,v,v)^\top.
\label{b1b2b3}
\end{equation}
Then the generator matrix $B$ is given by
$$
B = \begin{pmatrix}
u & v & 0 \\
u & 0 & v \\
0 & v & v 
\end{pmatrix}
$$
which has determinant $-2uv^2$.
The Gram matrix is
\begin{equation}
\label{GramG}
G = B\,B^\top =  \begin{pmatrix}
u^2+v^2 & u^2 & v^2 \\
u^2 & u^2+v^2 & v^2 \\
v^2 & v^2 & 2v^2 
\end{pmatrix} 
= v^2\begin{pmatrix}
1+A & A & 1 \\
A & 1+A & 1 \\
1 & 1 & 2
\end{pmatrix}
\end{equation}
where $A=u^2/v^2$ and $G$ is positive definite for $A>0$. Conway and Sloane~\cite{duals} use $\sigma = u/v$, so $A = \sigma^2$. 
The most important cases, in decreasing numerical order, are:
\begin{enumerate}
\item $A=1$: the face-centred cubic (f.c.c.) lattice;
\item $A=1/\sqrt2$: the mean centred-cuboidal (m.c.c.) lattice;
\item $A=1/2$: the body-centred cubic (b.c.c.) lattice;
\item $A=1/3$: the axial centred-cuboidal (a.c.c.) lattice.
\end{enumerate}
The f.c.c. and b.c.c. lattices are well known. The corresponding Gram matrices for the f.c.c. and b.c.c lattices
are identical to the ones shown in our previous paper on lattice sums~\cite{paper1}. The m.c.c. and a.c.c. lattices
occur in~\cite{duals} and~\cite{conway2007optimal}. The m.c.c. lattice is the densest isodual lattice in three-dimensional space.

The quadratic form associated with the lattice is
\begin{align}
g(i,j,k) &= (i,j,k) \, G (i,j,k)^\top \nonumber \\
&= (u^2+v^2)i^2 + (u^2+v^2)j^2+2v^2k^2 + 2u^2ij + 2v^2ik+2v^2jk \nonumber \\
&= u^2(i^2+j^2)+v^2(j+k)^2+v^2(i+k)^2 \nonumber \\
&=v^2\left(A(i+j)^2+(j+k)^2+(i+k)^2\right). \label{Gg}
\end{align}
It is easy to check that
\begin{equation}
\label{minnorm}
\min_{i,j,k\in\mathbb{Z} \atop (i,j,k) \neq (0,0,0)}\{A(i+j)^2+(j+k)^2+(i+k)^2 \}
= \begin{cases}
4A & \text{if $0<A<1/3$,} \\
A+1 & \text{if $1/3\leq A \leq 1$,} \\
2 & \text{if $A>1$.}
\end{cases} 
\end{equation}
It follows from~\eqref{Gg} and~\eqref{minnorm} that the minimum distance is given by
\begin{equation}
\label{pd1}
d_\textrm{min} 
= \min_{i,j,k\in\mathbb{Z} \atop (i,j,k) \neq (0,0,0)} \, \sqrt{g(i,j,k)} = \begin{cases}
2v\sqrt{A} & \text{if $0<A<1/3$,} \\
v\sqrt{A+1} & \text{if $1/3\leq A \leq 1$,} \\
v\sqrt2 & \text{if $A>1$.}
\end{cases} 
\end{equation}
We rescale to make the minimum distance~$1$ by defining
\begin{align*}
g(A;i,j,k) &= \frac{g(i,j,k)}{\left(d_\textrm{min} \right)^2} \\
&= \begin{cases}
\displaystyle{\frac{1}{4A}\left(A(i+j)^2+(j+k)^2+(i+k)^2\right)} & \text{if $0<A<1/3$,} \\ \\
\displaystyle{\frac{1}{A+1}\left(A(i+j)^2+(j+k)^2+(i+k)^2\right)} & \text{if $1/3\leq A \leq 1$,} \\ \\
\displaystyle{\frac{1}{2}\left(A(i+j)^2+(j+k)^2+(i+k)^2\right)} & \text{if $A>1$.}
\end{cases} 
\end{align*}
The examples we are interested are in (f.c.c., m.c.c., b.c.c., a.c.c.) all satisfy \mbox{$1/3\leq A \leq 1$}.
Because of its practical interest, this is the only case we will study, and from here on (unless otherwise mentioned)
we will always assume \mbox{$1/3\leq A \leq 1$}
in which case we have
\begin{equation}
\label{gdef}
g(A;i,j,k) = \frac{1}{A+1}\left(A(i+j)^2+(j+k)^2+(i+k)^2\right),
\end{equation}
corresponding to the rescaled Gram matrix
\begin{equation}
G(A):= \frac{1}{\left(d_\textrm{min} \right)^2}\;G = \frac{1}{A+1}\begin{pmatrix}
1+A & A & 1 \\
A & 1+A & 1 \\
1 & 1 & 2
\end{pmatrix}.
\end{equation}
The packing density is calculated as
$$
\Delta_\mathcal{L}=\frac{4\pi}{3}\frac{\rho^3}{\sqrt{\textrm{det}(G)}}
$$
where $\rho = d_\textrm{min}/2$ and
$$
\det(G) = \left(\det(B)\right)^2 = 4u^2v^4 = 4Av^6.
$$
It follows that
$$
\Delta_\mathcal{L}=\frac{\pi}{12 \, \sqrt{A}} \left(\frac{d_\textrm{min}}{v}\right)^3.
$$
On using the formula for $d_\textrm{min}$ in~\eqref{pd1} we obtain the formula for the packing density, given by
$$
\Delta_\mathcal{L}
= \begin{cases}
\displaystyle{\frac{2\pi A}{3}} & \text{if $0<A<1/3$,} \\ \\
\displaystyle{\frac{\pi}{12}\sqrt{\frac{(A+1)^3}{A}}} & \text{if $1/3\leq A \leq 1$,} \\ \\
\displaystyle{\frac{\pi}{6}\sqrt{\frac{2}{A}}} & \text{if $A>1$.}
\end{cases} 
$$
Figure~1 shows a graph of the packing density as a function of the parameter $A$.
Further information is recorded in Table~\ref{tab:lattice}.
In the main region of interest $1/3\leq A \leq 1$, we have
\begin{equation}
\label{pdA}
\Delta_\mathcal{L} =\frac{\pi}{12}\sqrt{\frac{(A+1)^3}{A}}
\end{equation}
and so
\begin{equation}
\label{pdA1}
\frac{\ud \Delta_\mathcal{L} }{\ud A} =\frac{\pi}{12}\left(A-\frac12\right)\sqrt{\frac{A+1}{A^3}}.
\end{equation}
It follows that on the interval $1/3\leq A \leq 1$, the packing density has a maximum of $\pi\sqrt{2}/6 \approx 0.74048$ at $A=1$
corresponding to f.c.c., and a minimum of \mbox{$\pi\sqrt{3}/8 \approx 0.68017$} at $A=1/2$ corresponding to b.c.c.

It is also interesting to note that as  $A\rightarrow\infty$ the limiting case of the lattice is the two-dimensional square close 
packing with minimal distance~$1$ and kissing number~$4$; while in the other extreme case the limit as $A\rightarrow 0$
gives the one-dimensional lattice with minimal distance~$1$ and kissing number~$2$. These cases are briefly analysed in Appendix~B.

\begin{table}[htp]
\setlength{\tabcolsep}{0.1cm}
\caption{\label{tab:lattice} Kissing number, packing density $\Delta_\mathcal{L}$ and integer combinations $\vect{i}_n$ for the lattice
associated with the Gram matrix $G$ defined in \eqref{GramG}. The values in the table depend only on $A$
and are independent of $v$.}
\begin{center}
\begin{tabular}{l|l|l|l|l}
\hline
Region & $A$ & $\textrm{kiss}(\mathcal{L})$ & $\Delta_\mathcal{L}$ & integer combinations$^a$\\
\hline
I 		& $(0,\frac{1}{3})$	& 2 		& $\frac{2\pi A}{3}$	& $\vect{i}_1,\vect{i}_2$ \\
a.c.c. 	& $\frac{1}{3}$		& 10 & $\frac{2\pi}{9}$	& $\vect{i}_1,...,\vect{i}_{10}$\\
II 		& $(\frac{1}{3},1)$	& 8 	& $\frac{\pi}{12}\sqrt{\frac{(A+1)^3}{A}}$	&  $\vect{i}_3,...,\vect{i}_{10}$\\
f.c.c. 	& $1$			& 12 & $\frac{\pi\,\sqrt{2}}{6}$	&  $\vect{i}_3,...,\vect{i}_{14}$ \\
III 		& $(1,\infty)$		& 4 	& $\frac{\pi}{6}\sqrt{\frac{2}{A}}$	&  $\vect{i}_{11},...,\vect{i}_{14}$\\
\hline
\end{tabular}
\end{center}
\scriptsize{$^a$ The integer combinations $\vect{i}$ which determine $d_\textrm{min}$ in~\eqref{pd1} for the different regions are as follows: $\vect{i}_1^\textrm{I}=(-1,-1,1)$, $\vect{i}_2^\textrm{I}=(1,1,-1)$, $\vect{i}_3^\textrm{II}=(-1,0,0)$, $\vect{i}_4^\textrm{II}=(-1,0,1)$, $\vect{i}_5^\textrm{II}=(0,-1,0)$, $\vect{i}_6^\textrm{II}=(0,-1,1)$, $\vect{i}_7^\textrm{II}=(0,1,-1)$, $\vect{i}_8^\textrm{II}=(0,1,0)$, $\vect{i}_9^\textrm{II}=(1,0,-1)$, $\vect{i}_{10}^\textrm{II}=(1,0,0)$, $\vect{i}_{11}^\textrm{III}=(-1,1,0)$, $\vect{i}_{12}^\textrm{III}=(0,0,-1)$, $\vect{i}_{13}^\textrm{III}=(0,0,1)$, $\vect{i}_{14}^\textrm{III}=(1,-1,0)$. }
\end{table}

\medskip

\begin{figure*}[htbp]
\begin{center}
\includegraphics[width=11cm]{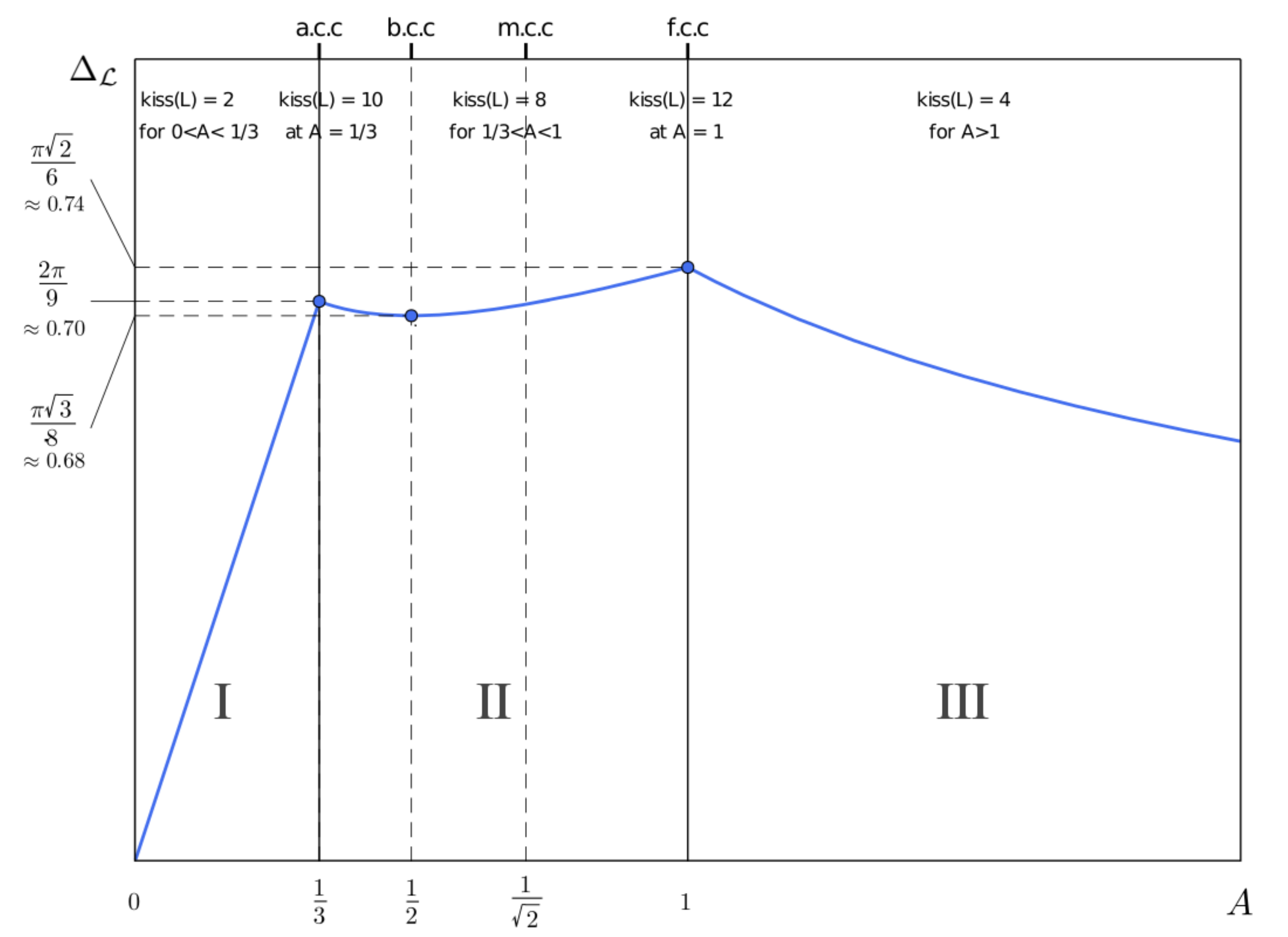}\\
\caption{
{\color{black}{Graph of the packing density $\Delta_\mathcal{L}$ versus~$A$.}} The regions I, II and III are according to the different kissing numbers.
Explicit formulas are given in Table~\ref{tab:lattice}. 
The location of the f.c.c., m.c.c., b.c.c. and a.c.c. lattices are indicated on the graph.}
\label{fig:ALattice}
\end{center}
\end{figure*}

Given a positive definite quadratic form $g(i,j,k)$, the corresponding theta series is defined for $|q|<1$ by
$$
\theta_g(q) = \sum_{i=-\infty}^\infty \sum_{j=-\infty}^\infty \sum_{k=-\infty}^\infty q^{g(i,j,k)}.
$$
For the quadratic form in~\eqref{gdef} the theta series is
$$
\theta(A;q) = \sum_{i=-\infty}^\infty \sum_{j=-\infty}^\infty \sum_{k=-\infty}^\infty  q^{(A(i+j)^2+(j+k)^2+(i+k)^2)/(A+1)} \quad \text{ where $1/3\leq A \leq 1$.}
$$
The first few terms in the theta series for f.c.c., m.c.c., b.c.c. and a.c.c. as far as~$q^9$ are given respectively by
\begin{align*}
\theta(1;q) &= 1+12q+6q^2+24q^3+12q^4+24q^5+8q^6+48q^7+6q^8+36q^9+\cdots, \\
\theta(\frac{1}{\sqrt{2}};q) &= 1+8q+4q^{4-2\sqrt2} + 2 q^{4 \sqrt2-4} +4 q^{8-4 \sqrt2}+8 q^{2 \sqrt2} + 16q^{-4\sqrt2+9} \\
&\quad +8q^4+8q^{8\sqrt2-7}+4 q^{16-8\sqrt2}+8 q^{-8\sqrt2+17} + 8q^{20-10\sqrt2} + 8q^{-4\sqrt2+12} \\
&\quad+ 2q^{16\sqrt2-16} + 16q^{4\sqrt2+1} + 16q^{-6\sqrt2+16} + 8q^{14\sqrt2-12}+16q^{-12\sqrt2+25} \\
&\quad+8q^{-8+12\sqrt2} + 8q^9 +\cdots, \\
\theta(\frac12;q) &= 1+8q+6q^{4/3}+12q^{8/3}+8q^4+24q^{11/3}+6q^{16/3}+24q^{19/3}+24q^{20/3} \\
&\quad +24q^8+32q^9+\cdots, \\
\theta(\frac13;q) &= 1 + 10q + 4q^{3/2}+8q^{5/2}+12q^3+26q^4+8q^{11/2}+20q^6+32q^7 \\
&\quad +8q^{15/2}+16q^{17/2}+10q^9+\cdots.
\end{align*}
Since the quadratic form $g(A;i,j,k)$ has been normalised to make the minimum distance~$1$, the kissing number occurs in each theta series as the coefficient of~$q$.
That is, we have $\textrm{kiss(f.c.c.)}=12$, $\textrm{kiss(m.c.c.)}=8$, $\textrm{kiss(b.c.c.)}=8$ and $\textrm{kiss(a.c.c.)}=10$.

We introduce the following lattice sum important in solid state theory \cite{BorweinEtAl},
\begin{equation}
\label{LAsdefinition}
L(A;s) = {\sum_{i,j,k}}^{\prime} \left(\frac{1}{g(A;i,j,k)}\right)^s = {\sum_{i,j,k}}^{\prime} \left(\frac{A+1}{A(i+j)^2+(j+k)^2+(i+k)^2}\right)^s
\end{equation}
where $1/3\leq A \leq 1$.
Here and throughout this work, a prime on the summation symbol  will denote
that the sum ranges over all integer values except for the term when all of the summation indices are simultaneously zero.
Thus, the sums in~\eqref{LAsdefinition} are over all integer values
 of $i$, $j$ and $k$ except for the term $(i,j,k)=(0,0,0)$ which is omitted.
This lattice sum smoothly connects four different well known lattices, i.e., when $A=1$, $1/\sqrt{2}$, $1/2$ or $1/3$ we obtain the expressions for the
lattices f.c.c, m.c.c., b.c.c. and a.c.c. respectively (face-centred cubic, mean centred-cuboidal, body-centred cubic, and axial centred cuboidal).
In these cases, we also write
\begin{align*}
L_3^{\text{FCC}}(s) &= L(1;s), \\
L_3^{\text{MCC}}(s) &= L(1/\sqrt{2};s), \\
L_3^{\text{BCC}}(s) &= L(1/2;s), \\
\text{and}\quad L_3^{\text{ACC}}(s) &= L(1/3;s).
\end{align*}

We conclude this section by reconciling the Gram matrix $G$ in~\eqref{GramG} with two 
matrices given by Conway and Sloane~\cite{duals}. Let
$$
U_1 = \begin{pmatrix}
1 & 0 & 0 \\
-1 & 0 & 1 \\
0 & -1 & 0 
\end{pmatrix}
\quad\text{and} \quad
U_2 = \begin{pmatrix}
1 & 1 & -1 \\
1 & 0 & 0 \\
0 & 1 & 0 
\end{pmatrix}
$$
and consider the equivalent matrices $G_1$ and $G_2$ defined by
\begin{equation}
\label{GramG1}
G_1 = U_1\,G\,U_1^\top =  \begin{pmatrix}
u^2+v^2 & -u^2 & -u^2 \\
-u^2 & u^2+v^2 & u^2-v^2 \\
-u^2 & u^2-v^2 & u^2+v^2 
\end{pmatrix}
\end{equation}
and
\begin{equation}
\label{GramG2}
G_2 = U_2\,G\,U_2^\top =  \begin{pmatrix}
4u^2 & 2u^2 & 2u^2 \\
2u^2 & u^2+v^2 & u^2 \\
2u^2 & u^2 & u^2+v^2 
\end{pmatrix}.
\end{equation}
When $u=1/\sqrt{2}$ and $v=1/\sqrt[4]{2}$, the matrix $G_1$ in~\eqref{GramG1} is the Gram matrix
for the m.c.c lattice given by Conway and Sloane~\cite[(10)]{duals}.
Moreover, when $u=\sqrt{1/3}$ and $v=\sqrt{2/3}$, the matrix $G_1$ leads to another well-known
quadratic form for the b.c.c. lattice, e.g., see~\cite[(8b)]{paper1}.
 When $u=1$, $v=\sqrt{3}$, the matrix $G_2$ in~\eqref{GramG2} is the Gram matrix for the a.c.c. lattice given
in~\cite[p. 378]{duals}.
Since $\det U_1 = \det U_2 = 1$ it follows that
$$
\det G_1 = \det G_2 = \det G = (\det B)^2 = 4u^2v^4 = 4v^6A.
$$
The corresponding quadratic forms $g_1$ and $g_2$ are defined by
\begin{align*}
g_1(i,j,k) &= (i,j,k) \, G_1 (i,j,k)^\top \\
&= (u^2+v^2)i^2 + (u^2+v^2)j^2+(u^2+v^2)k^2 - 2u^2ij - 2u^2ik+2(u^2-v^2)jk 
\intertext{and}
g_2(i,j,k) &= (i,j,k) \, G_2 (i,j,k)^\top \\
&= 4u^2i^2 + (u^2+v^2)j^2+(u^2+v^2)k^2 + 4u^2ij + 4u^2ik+2u^2jk.
\end{align*}
They are related to the quadratic form $g$ in~\eqref{Gg} by
\begin{equation}
\label{g1qf}
g_1(i,j,k) = g\left( (i,j,k)U_1 \right) = g(i-j,-k,j) 
\end{equation}
and
$$
g_2(i,j,k) = g\left( (i,j,k)U_2 \right) = g(i+j,i+k,-i).
$$
\\


\section{A minimum property of the lattice sum $L(A;s)$}

In the previous section --- see~\eqref{pdA} and~\eqref{pdA1} --- it was noted that on the interval $1/3 \leq A \leq 1$, the packing density function~$\Delta_\mathcal{L}$
has a minimum value when~$A=1/2$. The next result shows that provided $s>3/2$, the corresponding lattice sum $L(A;s)$
also attains a minimum at the same value $A=1/2$.

\begin{Theorem}
\label{theorem3point1}
Let $L(A;s)$ be the lattice defined by~\eqref{LAsdefinition}, that is,
$$
L(A;s) = {\sum_{i,j,k}}^{\prime} \left(\frac{1}{g(A;i,j,k)}\right)^s = {\sum_{i,j,k}}^{\prime} \left(\frac{A+1}{A(i+j)^2+(j+k)^2+(i+k)^2}\right)^s
$$
where $s>3/2$ and $1/3\leq A \leq 1$. 
Then 
$$
\left.\frac{\partial}{\partial A} L(A;s) \right|_{A=1/2} =0
\quad \text{and}\quad \left.\frac{\partial^2}{\partial A^2} L(A;s) \right|_{A=1/2} >0.
$$
\end{Theorem}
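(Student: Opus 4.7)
The plan is to pass from the form $g(A;\cdot)$ to the equivalent form coming from $G_1$. Since $\det U_1=1$, the substitution $(i,j,k)\mapsto (i-j,-k,j)$ is a bijection of $\mathbb{Z}^3\setminus\{0\}$, and \eqref{g1qf} yields
$$L(A;s) = {\sum_{i,j,k}}^{\prime} h(A;i,j,k)^{-s},\qquad h(A;i,j,k) := \frac{g_1(i,j,k)}{v^2(A+1)} = i^2+j^2+k^2 - \frac{2A}{A+1}(ij+ik) + \frac{2(A-1)}{A+1}\,jk.$$
The essential point is that at $A=1/2$ both off-diagonal coefficients equal $-2/3$, so
$$h(\tfrac12;i,j,k) = i^2+j^2+k^2 - \tfrac{2}{3}(ij+ik+jk)$$
is invariant under every permutation of $(i,j,k)$. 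I will exploit this $S_3$ symmetry to kill the first derivative and to produce a sum of squares in the second. Termwise differentiation is legitimate because, for $A$ in a small closed interval about $1/2$, one has $h(A;i,j,k)\ge c(i^2+j^2+k^2)$ with $c>0$ uniform, so the majorants reduce to a multiple of $\sum'(i^2+j^2+k^2)^{-s}$, which converges for $s>3/2$.

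A direct computation gives
$$\frac{\partial h(A;i,j,k)}{\partial A} = \frac{2\,N(i,j,k)}{(A+1)^2},\qquad N(i,j,k):=2jk-ij-ik,\qquad \frac{\partial^2 h}{\partial A^2} = -\frac{4\,N(i,j,k)}{(A+1)^3},$$
so that
$$\left.\frac{\partial L}{\partial A}\right|_{A=1/2} = -\frac{8s}{9}\,{\sum_{i,j,k}}^{\prime}\frac{N(i,j,k)}{h(\tfrac12;i,j,k)^{s+1}}.$$
Because $h(\tfrac12;\cdot)$ is invariant under the cyclic shift $(i,j,k)\mapsto(j,k,i)$, reindexing and averaging the summand over the three cyclic shifts replaces $N(i,j,k)$ by $\tfrac{1}{3}[N(i,j,k)+N(j,k,i)+N(k,i,j)]=\tfrac{1}{3}\cdot 0=0$, giving $\partial_A L|_{A=1/2}=0$.

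For the second derivative, differentiating once more yields
$$\frac{\partial^2 L}{\partial A^2} = -s\,{\sum_{i,j,k}}^{\prime}\frac{\partial_A^2 h}{h^{s+1}} + s(s+1)\,{\sum_{i,j,k}}^{\prime}\frac{(\partial_A h)^2}{h^{s+2}}.$$
Since $\partial_A^2 h$ is again a scalar multiple of $N$, the first sum vanishes at $A=1/2$ by the same cyclic-averaging argument. The second sum is a primed sum of non-negative terms $N(i,j,k)^2/h^{s+2}$, and it is strictly positive because, for instance, $(i,j,k)=(1,1,0)$ contributes $N=-1\neq 0$. With $s(s+1)>0$ this gives $\partial^2_A L|_{A=1/2}>0$. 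The only non-routine step is recognising that the $G_1$-equivalent form is the right coordinate system, because it makes the full $S_3$ symmetry at $A=1/2$ manifest; once that is seen, the vanishing of the first derivative and the sum-of-squares structure of the second come out almost by inspection.
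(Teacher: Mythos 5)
Your proposal is correct and follows essentially the same route as the paper: the same unimodular change of variables via \eqref{g1qf}, the same observation that the resulting form at $A=1/2$ is symmetric in $(i,j,k)$, the same symmetry-averaging to kill the numerator of the first derivative (the paper averages over two transpositions rather than the three cyclic shifts, which is an immaterial difference), and the same sum-of-squares argument for the second derivative. Your justification of termwise differentiation via the uniform lower bound $h(A;i,j,k)\ge c(i^2+j^2+k^2)$ is, if anything, slightly more explicit than the paper's.
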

\begin{proof}
By definition we have
$$
L(A;s) = {\sum_{I,J,K}}^{\prime} \left( \frac{1}{g(A;I,J,K)}\right)^s
$$
where
$$
g(A;I,J,K) = \frac{1}{A+1}\left(A(I+J)^2+(J+K)^2+(I+K)^2\right).
$$
Now make the change of variables given by~\eqref{g1qf}, namely
$$
(I,J,K) = (i-j,-k,j).
$$
This is a bijection since
$$
(i,j,k) = (I+K,K,-J),
$$
and it follows that
\begin{align*}
L(A;s) 
&= {\sum_{i,j,k}}^{\prime} \left( \frac{1}{g(A;i-j,-k,j)}\right)^s \\
&={\sum_{i,j,k}}^{\prime} \frac{1}{ \left(i^2+j^2+k^2-2(ij+ik)\left(\frac{A}{A+1}\right)+2jk\left(\frac{A-1}{A+1}\right)\right)^s}.
\end{align*}
By direct calculation, the derivative is given by
\begin{equation}
\label{deriv0}
\frac{\partial}{\partial A} L(A;s) =\frac{2s}{(A+1)^2}
{\sum_{i,j,k}}^{\prime}  \frac{ij+ik-2jk}{ \left(i^2+j^2+k^2-2(ij+ik)\left(\frac{A}{A+1}\right)+2jk\left(\frac{A-1}{A+1}\right)\right)^{s+1}}.
\end{equation}
Setting $A=1/2$ gives
\begin{equation}
\label{deriv1}
\left.\frac{\partial}{\partial A} L(A;s)\right|_{A=1/2}
 =\frac{8s}{9}\;
{\sum_{i,j,k}}^{\prime}  \frac{ij+ik-2jk}{ \left(i^2+j^2+k^2-\frac23(ij+ik+jk)\right)^{s+1}}.
\end{equation}
Switching $i$ and $j$ gives
\begin{equation}
\label{deriv2}
\left.\frac{\partial}{\partial A} L(A;s)\right|_{A=1/2}
 =\frac{8s}{9}\;
{\sum_{i,j,k}}^{\prime}  \frac{ij+jk-2ik}{ \left(i^2+j^2+k^2-\frac23(ij+ik+jk)\right)^{s+1}},
\end{equation}
while switching $i$ and $k$ in~\eqref{deriv1} gives
\begin{equation}
\label{deriv3}
\left.\frac{\partial}{\partial A} L(A;s)\right|_{A=1/2}
 =\frac{8s}{9}\;
{\sum_{i,j,k}}^{\prime}  \frac{jk+ik-2ij}{ \left(i^2+j^2+k^2-\frac23(ij+ik+jk)\right)^{s+1}}.
\end{equation}
On adding \eqref{deriv1}, \eqref{deriv2} and~\eqref{deriv3} and noting that
$$
(ij+ik-2jk) + (ij+jk-2ik) + (jk+ik-2ij) = 0
$$
it follows that
$$
\left.\frac{\partial}{\partial A} L(A;s)\right|_{A=1/2} = 0.
$$
Next, taking the derivative of~\eqref{deriv0} gives
\begin{align*}
\frac{\partial^2}{\partial A^2} L(A;s) 
&=
\frac{-4s}{(A+1)^3}
{\sum_{i,j,k}}^{\prime}  \frac{ij+ik-2jk}{ \left(i^2+j^2+k^2-2(ij+ik)\left(\frac{A}{A+1}\right)+2jk\left(\frac{A-1}{A+1}\right)\right)^{s+1}} \\
&\quad + \frac{4s(s+1)}{(A+1)^4}
{\sum_{i,j,k}}^{\prime}  \frac{(ij+ik-2jk)^2}{ \left(i^2+j^2+k^2-2(ij+ik)\left(\frac{A}{A+1}\right)+2jk\left(\frac{A-1}{A+1}\right)\right)^{s+2}}.
\end{align*}
When $A=1/2$ the first sum is zero by the calculations in the first part of the proof. Therefore,
\begin{align*}
\left.\frac{\partial^2}{\partial A^2} L(A;s) \right|_{A=1/2}
&= \frac{64s(s+1)}{81}
{\sum_{i,j,k}}^{\prime}  \frac{(jk+ik-2ij)^2}{ \left(i^2+j^2+k^2-\frac23(ij+ik+jk)\right)^{s+2}}.
\end{align*}
The term $(jk+ik-2ij)^2$ in the numerator is non-negative and not always zero. The
denominator is always positive because the quadratic form is positive definite. It follows that 
$$
\left.\frac{\partial^2}{\partial A^2} L(A;s) \right|_{A=1/2} >0
$$
as required.

The calculations above are valid provided term-by-term differentiation of the series is allowed. All of the series encountered above
converge absolutely and uniformly on compact subsets of the region Re$(s)>3/2$. On restricting~$s$ to real values, the conclusion
about positivity is valid for $s>3/2$.
\end{proof}

A consequence of Theorem~\ref{theorem3point1} is that for any fixed value $s>3/2$, the lattice
sum $L(A;s)$ attains a minimum when~\mbox{$A=1/2$.} 
Graphs of $y=L(A;s)$ to illustrate this minimum property are shown in Fig.~2.
In the limiting case $s\rightarrow\infty$ we have
$$
L(A;\infty) = \lim_{s\rightarrow\infty}L(A;s) = \textrm{kiss}(\mathcal{L}) 
=  \begin{cases}
10 & \text{if $A=1/3$,} \\
8 & \text{if $1/3 < A < 1$,} \\
12 & \text{if $A=1$.}
\end{cases} 
$$
This graph is also shown in Fig.~2.

 \begin{figure*}[htbp]
  \begin{center}
  \includegraphics[scale=0.33]{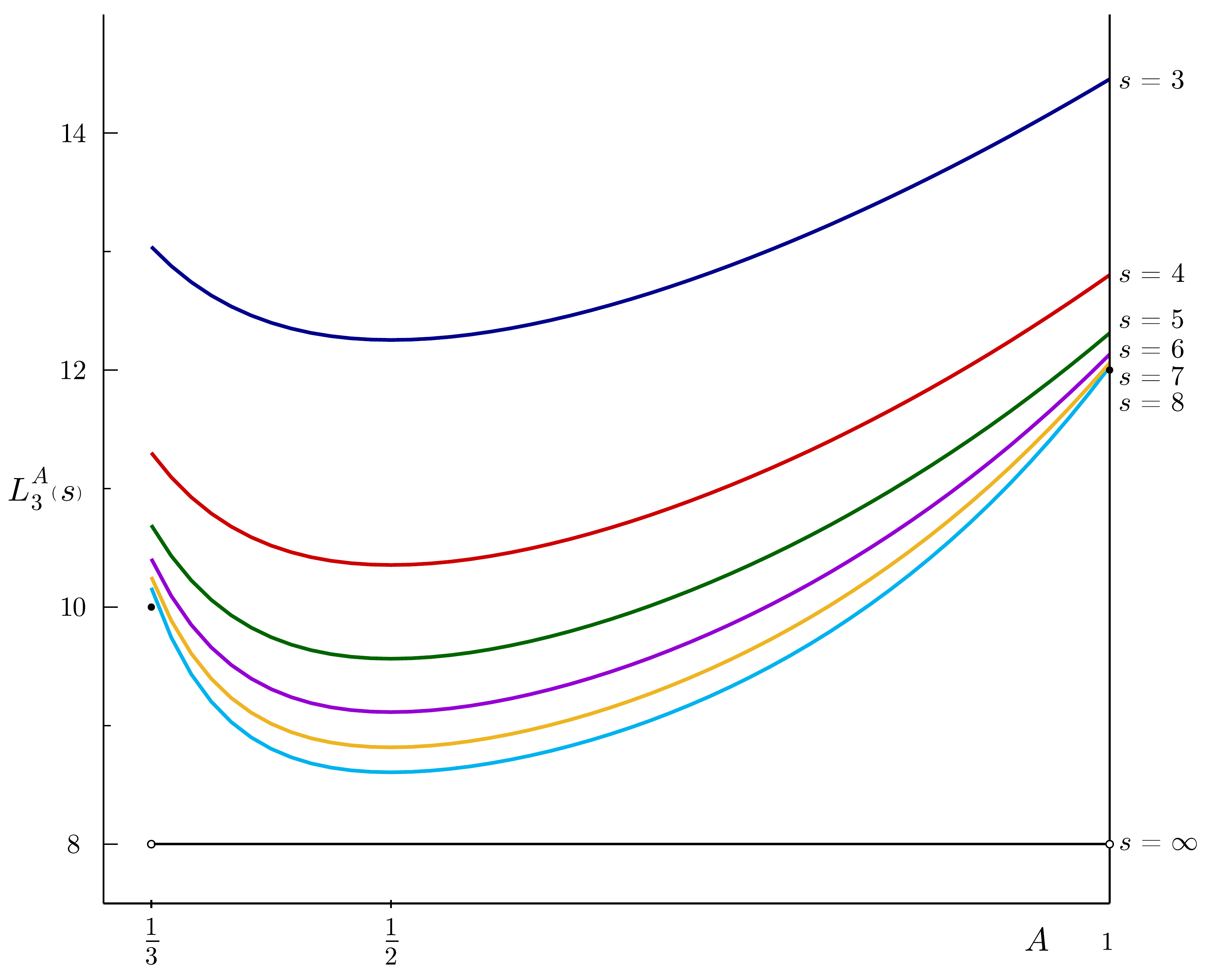}\\
  \caption{
  {\color{black}{Graph of $L(A;s)$ versus $A$ for various values of $s$.}}
  }
  \label{fig:LatticeSums}
  \end{center}
  \end{figure*}


\section{Evaluation of the sum $L(A;s)$}
We now turn to the evaluation of $L(A;s)$. Our objectives are to find formulas that are both simple
and computationally efficient. The formulas we obtain can be used to show that $L(A;s)$ can be analytically continued
to complex values of~$s$, with a simple pole at $s=3/2$ and no other singularities.

One method of evaluating the sum $L(A;s)$ is to use the Terras decomposition. This was done for 
f.c.c. and b.c.c. in~\cite{paper1} and can in principle also be used for $L(A;s)$. 
Here we use an easier method that also works the whole parameter range $1/3 \leq A \leq 1$
and hence gives the lattice sum for all four lattices f.c.c., m.c.c., b.c.c. and a.c.c. In fact, the advantage here is that 
we obtain two formulas which not only can be used as checks, but also contain different information about their analytic continuation.

We begin by writing the lattice sum in the form
\begin{align}
L(A;s) &= {\sum_{i,j,k}}^{\prime} \left(\frac{A+1}{A(i+j)^2+(j+k)^2+(i+k)^2}\right)^s \nonumber \\
&={\sum_{I,J,K \atop I+J+K\, \text{even}}}^{\!\!\!\!\!\!\!\!\prime} \left(\frac{A+1}{AI^2+J^2+K^2}\right)^s \nonumber \\
&= \frac{(A+1)^s}{2} {\sum_{i,j,k}}^{\prime} \;\;\; \frac{1+(-1)^{i+j+k}}{(Ai^2+j^2+k^2)^s}. \label{easier}
\end{align}
Therefore, we evaluate the sums 
\begin{equation}
\label{sum1}
T_1(A;A;s) := {\sum_{i,j,k}}^{\prime} \;\;\; \frac{1}{(Ai^2+j^2+k^2)^s}
\end{equation}
and
\begin{equation}
\label{sum2}
T_2(A;s) := {\sum_{i,j,k}}^{\prime} \;\;\; \frac{(-1)^{i+j+k}}{(Ai^2+j^2+k^2)^s}.
\end{equation}
By \eqref{easier}, \eqref{sum1} and \eqref{sum2}, the required lattice sum is given by
\begin{equation}
\label{sum3}
L(A;s) =  \frac{(A+1)^s}{2} \left(T_1(A;s)  + T_2(A;s)  \right).
\end{equation}

\subsection{The sum $T_1(A;s)$}
We shall consider two ways for handling the sum in~\eqref{sum1}. The first is to separate the terms according to whether $i=0$ or $i\neq 0$,
which gives rise to
\begin{equation}
\label{d1}
T_1(A;s) = f(s) + 2F(s)
\end{equation}
where
\begin{align*}
f(s) &= {\sum_{j,k}}^{\,\prime} \;\;\; \frac{1}{(j^2+k^2)^s}
\intertext{and}
F(s) &= \sum_{i=1}^\infty \sum_{j=-\infty}^\infty \sum_{k=-\infty}^\infty \frac{1}{(Ai^2+j^2+k^2)^s}.
\end{align*}
This is the starting point of the approach taken by Selberg and Chowla~\cite[Section~7]{selberg}.
Another way is to separate the terms
according to whether $(j,k)=(0,0)$ or $(j,k) \neq (0,0)$ and write
\begin{equation}
\label{d2}
T_1(A;s) = 2g(s) + G(s)
\end{equation}
where
\begin{align*}
g(s)
&= \sum_{i=1}^\infty \frac{1}{(Ai^2)^s} 
\intertext{and}
G(s) &=
{\sum_{j,k}}^{\,\prime} \; \sum_{i=-\infty}^\infty \frac{1}{(Ai^2+j^2+k^2)^s}.
\end{align*}
The series $F(s)$, $g(s)$ and $G(s)$ also depend on~$A$.
For simplicity we omit the parameter~$A$ from the notation and just write $F(s)$, $g(s)$ and $G(s)$ in place of $F(A;s)$, $g(A;s)$ and $G(A;s)$, respectively.

We will now analyse~\eqref{d1}; the corresponding analysis for~\eqref{d2} will be given in Section~\ref{sec4point2}.
By the well-known result~\eqref{zeta1} we have
$$
f(s)= {\sum_{j,k}}^{\,\prime} \;\;\; \frac{1}{(j^2+k^2)^s} = 4\zeta(s)L_{-4}(s)
$$
where
$$
\zeta(s) = \sum_{n=1}^\infty \frac{1}{n^s}
$$
is the Riemann zeta function, and
$$
L_{-4}(s) = \sum_{n=1}^\infty \frac{\sin \frac{n\pi}{2}}{n^s} = \frac{1}{1^s}-\frac{1}{3^s}+\frac{1}{5^s}
-\frac{1}{7^s}+\cdots.
$$
is the Dirichlet beta series. It remains to analyse $F(s)$. By the integral formula for the gamma function~\eqref{gamma2} we have
\begin{align*}
\pi^{-s}\Gamma(s)F(s)
&= \int_0^\infty x^{s-1} \sum_{i=1}^\infty e^{-\pi A x i^2} 
\sum_{j=-\infty}^\infty \sum_{k=-\infty}^\infty e^{-\pi x(j^2+k^2)} \; \ud x \\
&= \int_0^\infty x^{s-1} \sum_{i=1}^\infty e^{-\pi A x i^2} 
\left(\sum_{j=-\infty}^\infty e^{-\pi xj^2}\right)^2 \; \ud x.
\intertext{Now apply the modular transformation for theta functions~\eqref{theta2} to obtain}
\pi^{-s}\Gamma(s)F(s)
&= \int_0^\infty x^{s-1} \sum_{i=1}^\infty e^{-\pi A x i^2} 
\left(\frac{1}{\sqrt{x}}\sum_{j=-\infty}^\infty e^{-\pi j^2/x}\right)^2 \; \ud x \\
&= \int_0^\infty x^{s-2} \sum_{i=1}^\infty e^{-\pi A x i^2} 
\sum_{N=0}^\infty r_2(N) e^{-\pi N/x}\ \; \ud x
\intertext{where $r_2(N)$ is the number of representations of $N$ as a sum of two squares, e.g., see~\eqref{theta3}. Now separate out the $N=0$ term and
evaluate the resulting integrals. We find that}
\pi^{-s}\Gamma(s)F(s)
&= \sum_{i=1}^\infty\int_0^\infty x^{s-2} e^{-\pi A x i^2}  \; \ud x
+\sum_{i=1}^\infty\sum_{N=1}^\infty  r_2(N) 
\int_0^\infty x^{s-2}  e^{-\pi A x i^2-\pi N/x}\ \; \ud x \\
&= \frac{\Gamma(s-1)\zeta(2s-2)}{A^{s-1}\pi^{s-1}}
+ 2\sum_{i=1}^\infty\sum_{N=1}^\infty  r_2(N) \left(\frac{N}{Ai^2}\right)^{(s-1)/2} K_{s-1}\left(2\pi i \sqrt{AN}\right)
\end{align*}
where we have used the formula~\eqref{bessel1} for the $K$-Bessel function.
On using all of the above back in~\eqref{d1} we obtain
\begin{align}
&{\sum_{i,j,k}}^{\prime} \;\;\; \frac{1}{(Ai^2+j^2+k^2)^s}
= 4\zeta(s)L_{-4}(s) +\frac{2\pi}{(s-1)}\frac{ \zeta(2s-2)}{A^{s-1}} \nonumber \\
&\quad +\frac{4\pi^s}{\Gamma(s)}\, A^{(1-s)/2}\, \sum_{i=1}^\infty\sum_{N=1}^\infty  r_2(N) \left(\frac{N}{i^2}\right)^{(s-1)/2} K_{s-1}\left(2\pi i \sqrt{AN}\right).
\label{sumpart1}
\end{align}
This is essentially Selberg and Chowla's formula~\cite[(45)]{selberg}.
They write it in a slightly different form in terms of a sum over the divisors of $N$ to minimise the number of
Bessel function evaluations. We will leave it as it is for simplicity.

\subsection{A second formula for the sum $T_1(A;s)$}
\label{sec4point2}

This time we split the terms according as in~\eqref{d2} and start with
\begin{equation}
\label{d3}
T_1(A;s) = 2g(s) + G(s)
\end{equation}
where
\begin{align*}
g(s)
&= \sum_{i=1}^\infty \frac{1}{(Ai^2)^s} = A^{-s} \,\zeta(2s)
\intertext{and}
G(s) &=
{\sum_{j,k}}^{\,\prime} \; \sum_{i=-\infty}^\infty \frac{1}{(Ai^2+j^2+k^2)^s}.
\end{align*}
Now apply the integral formula for the gamma
function~\eqref{gamma2} and then the modular transformation for the theta function~\eqref{theta1} to obtain
\begin{align*}
\pi^{-s}\Gamma(s)G(s)
&= \int_0^\infty x^{s-1} {\sum_{j,k}}^{\;\prime} e^{-\pi(j^2+k^2)x}
\sum_{i=-\infty}^\infty e^{-\pi i^2 Ax} \, \ud x \\
&= \frac{1}{\sqrt{A}}\int_0^\infty x^{s-3/2}\, {\sum_{j,k}}^{\;\prime} e^{-\pi(j^2+k^2)x}
\sum_{i=-\infty}^\infty e^{-\pi i^2/Ax} \, \ud x.
\intertext{Separate the $i=0$ term, to get}
\pi^{-s}\Gamma(s)G(s)
&=\frac{1}{\sqrt{A}} \int_0^\infty x^{s-3/2}\, {\sum_{j,k}}^{\;\prime} e^{-\pi(j^2+k^2)x}
\, \ud x \\
& \quad + \frac{2}{\sqrt{A}}\int_0^\infty x^{s-3/2}\, {\sum_{j,k}}^{\;\prime} 
e^{-\pi(j^2+k^2)x}
\sum_{i=1}^\infty e^{-\pi i^2/Ax} \, \ud x.
\end{align*}
The first integral can be evaluated in terms of the gamma function by~\eqref{gamma2}, while the second integral
can be expressed in terms of the modified Bessel function by~\eqref{bessel1}. The result is
\begin{align*}
\pi^{-s}\Gamma(s)G(s)
&= \frac{\Gamma\left(s-\frac12\right)}{\sqrt{A}\, \pi^{s-\frac12}}
\;{\sum_{j,k}}^{\;\prime} \; \frac{1}{(j^2+k^2)^{s-\frac12}} \\
&\quad +\frac{4}{A^{\frac{s}{2}+\frac14}}\, {\sum_{j,k}}^{\;\prime} \;\; \sum_{i=1}^\infty
\left(\frac{i}{\sqrt{j^2+k^2}}\right)^{s-\frac12}
K_{s-\frac12}\left(2\pi i\sqrt{\frac{j^2+k^2}{A}}\right) \\
&= \frac{4}{\sqrt{A}} \, {\pi^{-(s-\frac12)}}\,\Gamma\left(s-\frac12\right)\, \zeta\left(s-\frac12\right)\, L_{-4}\left(s-\frac12\right) \\
&\quad + \frac{4}{A^{\frac{s}{2}+\frac14}} \sum_{N=1}^\infty \sum_{i=1}^\infty r_2(N) \left(\frac{i}{\sqrt{N}}\right)^{s-\frac12}
K_{s-\frac12}\left(2\pi i\sqrt{\frac{N}{A}}\right).
\end{align*}
On using all of the above back in~\eqref{d3} we  obtain
\begin{align}
\lefteqn{
{\sum_{i,j,k}}^{\prime} \;\;\; \frac{1}{(Ai^2+j^2+k^2)^s}} \nonumber \\
&= 2A^{-s}\zeta(2s)+4 \, \sqrt{\frac{\pi}{A}} \,\frac{\Gamma\left(s-\frac12\right)}{\Gamma(s)}\, \zeta\left(s-\frac12\right)\, L_{-4}\left(s-\frac12\right)
\nonumber \\
&\quad +   \frac{4}{A^{\frac{s}{2}+\frac14}} \, \frac{\pi^s}{\Gamma(s)}\,\sum_{N=1}^\infty \sum_{i=1}^\infty r_2(N) \left(\frac{i}{\sqrt{N}}\right)^{s-\frac12}
K_{s-\frac12}\left(2\pi i\sqrt{\frac{N}{A}}\right). \label{sumpart2}
\end{align}
The terms in~\eqref{sumpart1} involve $K_{s-1}$ Bessel functions whereas $K_{s-1/2}$ Bessel functions occur in~\eqref{sumpart2}.
That is because each application of the theta function transformation formula lowers the subscript in the resulting
Bessel function by $1/2$, due to the creation of a $x^{-1/2}$ factor in the integral. 
The theta function transformation formula is used twice (i.e., the formula is squared) in the derivation of~\eqref{sumpart1} and only once in the derivation of~\eqref{sumpart2}.

Each of~\eqref{sumpart1} and~\eqref{sumpart2} turns out to have its own advantages, as will be seen in Sections~\ref{sections1} and \ref{sections1half}.

\subsection{The alternating sum $T_2(A;s)$}
The analysis in the previous sections can be modified to handle the alternating series~\eqref{sum2} which has
the term $(-1)^{i+j+k}$ in the numerator, as follows.
Separating the terms according to whether $i=0$ or $i\neq 0$ gives
\begin{align}
T_2(A;s) &= h(s) + 2H(S) \label{T21}
\intertext{where}
h(s) &= {\sum_{j,k}}^{\prime} \;\;\; \frac{(-1)^{j+k}}{(j^2+k^2)^s} \nonumber
\intertext{and}
H(s) &= \sum_{i=1}^\infty \sum_{j=-\infty}^\infty \sum_{k=-\infty}^\infty \frac{(-1)^{i+j+k}}{(Ai^2+j^2+k^2)^s}. \nonumber
\end{align}
By a known result~\eqref{zeta2}, we have
$$
h(s) =  -4(1-2^{1-s})\zeta(s)L_{-4}(s).
$$
Next, by the integral formula for the gamma function~\eqref{gamma2} we have
\begin{align*}
\pi^{-s}\Gamma(s)H(s)
&= \int_0^\infty x^{s-1} \sum_{i=1}^\infty (-1)^ie^{-\pi A x i^2} 
\sum_{j=-\infty}^\infty \sum_{k=-\infty}^\infty (-1)^{j+k}e^{-\pi x(j^2+k^2)} \; \ud x \\
&= \int_0^\infty x^{s-1} \sum_{i=1}^\infty (-1)^ie^{-\pi A x i^2} 
\left(\sum_{j=-\infty}^\infty (-1)^je^{-\pi xj^2}\right)^2 \; \ud x.
\intertext{Now apply the modular transformation for theta functions to obtain}
\pi^{-s}\Gamma(s)H(s)
&= \int_0^\infty x^{s-1} \sum_{i=1}^\infty (-1)^ie^{-\pi A x i^2} 
\left(\frac{1}{\sqrt{x}}\sum_{j=-\infty}^\infty e^{-\pi (j+\frac12)^2/x}\right)^2 \; \ud x.
\intertext{By formula~\eqref{theta4} this can be expressed as}
\pi^{-s}\Gamma(s)H(s)&= \int_0^\infty x^{s-2} \sum_{i=1}^\infty (-1)^ie^{-\pi A x i^2} 
\sum_{N=0}^\infty r_2(4N+1) e^{-\pi (4N+1)/2x}\ \; \ud x \\
&= \sum_{i=1}^\infty\sum_{N=0}^\infty  (-1)^i r_2(4N+1) 
\int_0^\infty x^{s-2}  e^{-\pi A x i^2-\pi (4N+1)/2x}\ \; \ud x.
\intertext{The integral can be expressed in terms of Bessel functions by~\eqref{bessel1} to give}
\pi^{-s}\Gamma(s)H(s)&= 2\sum_{i=1}^\infty\sum_{N=0}^\infty  (-1)^ir_2(4N+1) \left(\frac{2N+\frac12}{Ai^2}\right)^{(s-1)/2} K_{s-1}\left(2\pi i \sqrt{A(2N+\frac12)}\right).
\end{align*}
On using all of the above back in~\eqref{T21} we obtain
\begin{align}
\lefteqn{{\sum_{i,j,k}}^{\prime} \;\;\; \frac{(-1)^{i+j+k}}{(Ai^2+j^2+k^2)^s}} \nonumber \\
&=  -4(1-2^{1-s})\zeta(s)L_{-4}(s)  \nonumber \\
&+\frac{4\pi^s}{\Gamma(s)} \, A^{(1-s)/2}\,
\sum_{i=1}^\infty\sum_{N=0}^\infty  (-1)^ir_2(4N+1) \left(\frac{2N+\frac12}{i^2}\right)^{(s-1)/2} K_{s-1}\left(2\pi i \sqrt{A(2N+\frac12)}\right).
 \label{sumpart3}
\end{align}

\subsection{A second formula for the alternating sum $T_2(A;s)$}
This time we separate the terms
according to whether $(j,k)=(0,0)$ or $(j,k) \neq (0,0)$ and write
\begin{equation}
\label{dd2}
T_2(A;s) = 2\sum_{i=1}^\infty \frac{(-1)^i}{(Ai^2)^s} + J(s)
\end{equation}
where
$$
J(s) =
{\sum_{j,k}}^{\,\prime} \; \sum_{i=-\infty}^\infty \frac{(-1)^{i+j+k}}{(Ai^2+j^2+k^2)^s}.
$$
By \eqref{zeta4} we have
$$
2\sum_{i=1}^\infty \frac{(-1)^i}{(Ai^2)^s}  = -2A^{-s}(1-2^{1-2s})\zeta(2s).
$$
It remains to analyse the sum for $J(s)$. By the integral formula for the gamma function \eqref{gamma2} we have
\begin{align*}
\pi^{-s}\Gamma(s)J(s)
&= \int_0^\infty x^{s-1} {\sum_{j,k}}^{\;\prime} (-1)^{j+j}e^{-\pi(j^2+k^2)x}
\sum_{i=-\infty}^\infty (-1)^ie^{-\pi i^2 Ax} \, \ud x.
\intertext{Now apply the modular transformation~\eqref{theta1b} to obtain}
\pi^{-s}\Gamma(s)J(s)
&= \frac{1}{\sqrt{A}}\int_0^\infty x^{s-3/2}\, {\sum_{j,k}}^{\;\prime} (-1)^{j+k}e^{-\pi(j^2+k^2)x}
\sum_{i=-\infty}^\infty e^{-\pi (i+\frac12)^2/Ax} \, \ud x.
\end{align*}
Now put $N=j^2+k^2$ and use
$$
\sum_{i=-\infty}^\infty e^{-\pi (i+\frac12)^2/Ax} = 2\sum_{i=0}^\infty e^{-\pi (i+\frac12)^2/Ax} = 2\sum_{i=1}^\infty e^{-\pi (i-\frac12)^2/Ax}
$$
to deduce 
\begin{align*}
\pi^{-s}\Gamma(s)J(s)
&= \frac{2}{\sqrt{A}} \sum_{N=1}^\infty\sum_{i=1}^\infty (-1)^Nr_2(N)  \int_0^\infty x^{s-3/2}\, e^{-\pi Nx - \pi (i-\frac12)^2/Ax} \, \ud x.
\intertext{The integral can be evaluated in terms of the modified Bessel function by~\eqref{bessel1} to give}
\pi^{-s}\Gamma(s)J(s)
&=\frac{4}{A^{\frac{s}{2}+\frac14}} \,
\sum_{N=1}^\infty \sum_{i=1}^\infty (-1)^N\,r_2(N) \left(\frac{i-\frac12}{\sqrt{N}}\right)^{s-\frac12}
K_{s-\frac12}\left(2\pi (i-\frac12)\sqrt{\frac{N}{A}}\right). 
\end{align*}
It follows that
\begin{align}
\lefteqn{{\sum_{i,j,k}}^{\prime} \;\;\; \frac{(-1)^{i+j+k}}{(Ai^2+j^2+k^2)^s}} \nonumber \\
&= -2A^{-s}(1-2^{1-{2s}})\zeta(2s) \nonumber \\
&\quad +   \frac{4}{A^{\frac{s}{2}+\frac14}} \, \frac{\pi^s}{\Gamma(s)}\,
\sum_{N=1}^\infty \sum_{i=1}^\infty (-1)^N\,r_2(N) \left(\frac{i-\frac12}{\sqrt{N}}\right)^{s-\frac12}
K_{s-\frac12}\left(2\pi (i-\frac12)\sqrt{\frac{N}{A}}\right). \label{sumpart4}
\end{align}

\subsection{Two formulas for $L(A;s)$}
On substituting the results of~\eqref{sumpart1} and \eqref{sumpart3} back into~\eqref{sum3} we obtain a formula for $L(A;s)$
in terms of~$K_{s-1}$ Bessel functions:
\begin{align}
L(A;s) 
&= 4\left(\frac{A+1}{2}\right)^s \zeta(s)L_{-4}(s) + \frac{\pi A}{s-1} \left(1+\frac{1}{A}\right)^s \zeta(2s-2) \nonumber \\
&\quad + \frac{2\pi^s\sqrt{A}}{\Gamma(s)} \left(\sqrt{A}+\frac{1}{\sqrt{A}}\right)^s \sum_{i=1}^\infty\sum_{N=1}^\infty  r_2(N) \left(\frac{N}{i^2}\right)^{(s-1)/2} K_{s-1}\left(2\pi i \sqrt{AN}\right) \nonumber \\
&\quad + \frac{2\pi^s\sqrt{A}}{\Gamma(s)} \left(\sqrt{A}+\frac{1}{\sqrt{A}}\right)^s
\sum_{i=1}^\infty\sum_{N=0}^\infty  (-1)^ir_2(4N+1)\nonumber \\
&\qquad\qquad\qquad\times  \left(\frac{2N+\frac12}{i^2}\right)^{(s-1)/2}  K_{s-1}\left(2\pi i \sqrt{A(2N+\frac12)}\right).
\label{L3formula1}
\end{align}
\noindent
On the other hand, if the results of~\eqref{sumpart2} and~\eqref{sumpart4} are used in~\eqref{sum3}, the resulting formula for $L(A;s)$ involves
$K_{s-1/2}$ Bessel functions:
\begin{align}
L(A;s)
&= 2\left(\frac{A+1}{4A}\right)^s\zeta(2s)
+2\,\sqrt{\frac{\pi}{A}}\,(A+1)^s\,\frac{\Gamma\left(s-\frac12\right)}{\Gamma(s)}\, \zeta\left(s-\frac12\right)\, L_{-4}\left(s-\frac12\right) \nonumber \\
&\quad + \frac{2}{A^{1/4}}\left(\sqrt{A}+\frac{1}{\sqrt{A}}\right)^s\, \frac{\pi^s}{\Gamma(s)}\, \,\sum_{N=1}^\infty \sum_{i=1}^\infty N^{(1-2s)/4}\,r_2(N) \nonumber \\
&\qquad \times \left\{ i^{s-\frac12}
K_{s-\frac12}\left(2\pi i\sqrt{\frac{N}{A}}\right) + (-1)^N \left(i-\frac12\right)^{s-\frac12}
K_{s-\frac12}\left(2\pi (i-\frac12)\sqrt{\frac{N}{A}}\right) \right\}.
\label{L3formula2}
\end{align}
The formulas \eqref{L3formula1} and \eqref{L3formula2} can be used as checks against each other.
Moreover, the formulas offer different information about special values of the lattice sum, as will be seen in Section~\ref{pole}.

\section{Hexagonal close packing}
Because of its importance in solid state chemistry and physics, we give a similar analysis of the lattice sum for the hexagonal close packed structure given by \cite{Kane}
$$
L_{3}^{\text{HCP}}(s) = S_1(s) + S_2(s)
$$
where
$$
S_1(s) ={\sum_{i,j,k}}^{\prime} \frac{1}{(i^2+ij+j^2+\frac83k^2)^s}
$$
and
$$
S_2(s) = \sum_{i,j,k} \frac{1}{((i+\frac13)^2+(i+\frac13)(j+\frac13)+(j+\frac13)^2
+\frac83(k+\frac12)^2)^s}.
$$
As before, the prime on the sum for $S_1(s)$ indicates that the summation is over all integers except for the term corresponding to
$i=j=k=0$ which is omitted. The sum for $S_2(s)$ is over all integer values of $i$, $j$ and $k$.
We shall analyse $S_1(s)$ and $S_2(s)$ one at a time.

\subsection{The sum $S_1(s)$}
Break the sum for $S_1(s)$ into two, according to whether $k=0$ or $k \neq 0$. This gives
\begin{equation}
S_1(s) = f(s) + 2F(s)
\end{equation}
where
$$
f(s) =   {\sum_{i,j}}^{\;\prime} \frac{1}{(i^2+ij+j^2)^s}
$$
and
$$
F(s) = \sum_{k=1}^\infty\sum_{i,j}  \frac{1}{(i^2+ij+j^2+\frac83k^2)^s}.
$$
By~\eqref{zeta7} we have
$$
f(s) = 6\zeta(s) \,L_{-3}(s)
$$
where
$$
L_{-3}(s) = \sum_{k=1}^\infty \left(\frac{\sin(2k\pi/3)}{\sin(2\pi/3)}\right) \frac{1}{k^s} = \frac{1}{1^s}-\frac{1}{2^s}+\frac{1}{4^s}-\frac{1}{5^s}+\cdots.
$$
It remains to calculate $F(s)$.
Applying the gamma function integral \eqref{gamma2} followed by the theta function transformation formula~\eqref{theta5},
we obtain
\begin{align*}
(2\pi)^{-s}\Gamma(s)F(s)
&= \int_0^\infty x^{s-1} \sum_{k=1}^\infty  e^{-16\pi k^2x/3} 
\sum_{i,j} e^{-2\pi (i^2+ij+j^2)x}
\, \ud x \\
&= \frac{1}{\sqrt{3}}\,\int_0^\infty x^{s-2} \sum_{k=1}^\infty  e^{-16\pi k^2x/3} 
\sum_{i,j} e^{-2\pi (i^2+ij+j^2)/3x}
\, \ud x.
\end{align*}
Now separate out the $i=j=0$ term and evaluate the resulting integrals. The result is
\begin{align*}
(2\pi)^{-s}\Gamma(s)F(s)
&= 
\frac{1}{\sqrt{3}}\, \sum_{k=-\infty}^\infty \int_0^\infty x^{s-2} e^{-16\pi k^2x/3} 
\, \ud x \\
&\quad +\frac{1}{\sqrt{3}}\,\sum_{k=-\infty}^\infty\;{\sum_{i,j}}^{\;\prime} 
\int_0^\infty x^{s-2}   e^{-16\pi k^2x/3-2\pi (i^2+ij+j^2)/3x} \, \ud x  \\
&=\frac{1}{\sqrt{3}} \left(\frac{3}{16\pi}\right)^{s-1} \Gamma(s-1)\zeta(2s-2)  \\
&\quad +\frac{2}{\sqrt{3}} 
\sum_{k=1}^\infty {\sum_{i,j}}^{\;\prime}  
\left(\frac{i^2+ij+j^2}{8k^2}\right)^{(s-1)/2}
K_{s-1}\left(\frac{8\pi k}{3} \sqrt{2(i^2+ij+j^2)}\right).
\end{align*}
It follows that
\begin{align}
S_1(s) &= 6\zeta(s) \,L_{-3}(s)
+\frac{4\pi}{\sqrt{3}}\left(\frac38\right)^{s-1} \left(\frac{1}{s-1}\right) \zeta(2s-2)
\nonumber \\
&\quad
+ \frac{8}{\sqrt{3}} \, \frac{\pi^s}{\Gamma(s)} 
\sum_{k=1}^\infty {\sum_{i,j}}^{\;\prime}  
\left(\frac{i^2+ij+j^2}{2k^2}\right)^{(s-1)/2}
K_{s-1}\left(\frac{8\pi k}{3} \sqrt{2(i^2+ij+j^2)}\right) \nonumber \\
&= 6\zeta(s) \,L_{-3}(s)
+\frac{4\pi}{\sqrt{3}}\left(\frac38\right)^{s-1} \left(\frac{1}{s-1}\right) \zeta(2s-2)
\nonumber \\
&\quad
+ \frac{8}{\sqrt{3}} \, \frac{\pi^s}{\Gamma(s)} 
\sum_{k=1}^\infty \sum_{N=1}^\infty u_2(N)
\left(\frac{N}{2k^2}\right)^{(s-1)/2}
K_{s-1}\left(\frac{8\pi k}{3} \sqrt{2N}\right) \label{S11}
\end{align}
where $u_2(N)$ is the number of representations of $N$ by the form $i^2+ij+j^2$.

\subsection{A second formula for the sum $S_1(s)$}
A different formula for $S_1(s)$ can be obtained by separating the terms in the series according to whether $i=j=0$ or $i$ and $j$ are
not both zero. This gives
$$
S_1(s) = 2 \left(\frac38\right)^s \sum_{k=1}^\infty \frac{1}{k^{2s}} + G(s)
$$
where
$$
G(s) = {\sum_{i,j}}^{\;\prime} \sum_{k=-\infty}^\infty \frac{1}{(i^2+ij+j^2+\frac83k^2)^s}.
$$
Applying the gamma function integral \eqref{gamma2} followed by the theta function transformation formula~\eqref{theta1},
we obtain
\begin{align*}
\pi^{-s}\Gamma(s)G(s)
&= \int_0^\infty x^{s-1} {\sum_{i,j}}^{\;\prime} e^{-\pi (i^2+ij+j^2)x}
\sum_{k=-\infty}^\infty e^{-8\pi k^2x/3} \, \ud x \\
&= \sqrt{\frac38} \int_0^\infty x^{s-\frac32}
 {\sum_{i,j}}^{\;\prime} e^{-\pi (i^2+ij+j^2)x}
\sum_{k=-\infty}^\infty e^{-3\pi k^2/8x} \, \ud x.
\end{align*}
Now separate out the $k=0$ term and evaluate the resulting integrals. The result is
\begin{align*}
\lefteqn{\pi^{-s}\Gamma(s)G(s)} \\
&= 
\sqrt{\frac38} \int_0^\infty x^{s-\frac32}
 {\sum_{i,j}}^{\;\prime} e^{-\pi (i^2+ij+j^2)x} \, \ud x \\
&\quad + 2\sqrt{\frac38} \int_0^\infty x^{s-\frac32}
 {\sum_{i,j}}^{\;\prime} e^{-\pi (i^2+ij+j^2)x}
\sum_{k=1}^\infty e^{-3\pi k^2/8x} \, \ud x \\
&= \sqrt{\frac38}\,\pi^{-(s-\frac12)}\,\Gamma(s-\frac12)
{\sum_{i,j}}^{\;\prime} \frac{1}{(i^2+ij+j^2)^{s-1/2}} \\
&\quad +4\left(\frac38\right)^{(2s+1)/4}
{\sum_{i,j}}^{\;\prime} \sum_{k=1}^\infty \left(\frac{k^2}{i^2+ij+j^2}\right)^{(2s-1)/4}
K_{s-\frac12}\left(\sqrt\frac32\,\pi k \sqrt{i^2+ij+j^2} \right).
\end{align*}
The first sum can be evaluated in terms of the Riemann zeta function and the $L_{-3}$ function by~\eqref{zeta7}.
In the second sum, we again use the notation $u_2(N)$ for the number of representations of $N$ by the form $i^2+ij+j^2$. The result is
\begin{align*}
\pi^{-s}\Gamma(s)G(s)
&=  \sqrt{\frac{27}2}\,\pi^{-(s-\frac12)}\,\Gamma\left(s-\frac12\right) \zeta\left(s-\frac12\right)L_{-3}\left(s-\frac12\right) \\
&\quad +4\left(\frac38\right)^{(2s+1)/4}
\sum_{N=1}^\infty \sum_{k=1}^\infty u_2(N)\, \left(\frac{k^2}{N}\right)^{(2s-1)/4}
K_{s-\frac12}\left(\pi k \sqrt{\frac{3N}{2}} \right).
\end{align*}
It follows that
\begin{align}
S_1(s) &= 2\left(\frac38\right)^s \zeta(2s)
+\sqrt\frac{27\pi}{2} \;\frac{\Gamma(s-\frac12)}{\Gamma(s)}\;
\zeta\left(s-\frac12\right)L_{-3}\left(s-\frac12\right)
\nonumber \\
&\quad
+ \frac{4\pi^s}{\Gamma(s)}\,\left(\frac38\right)^{(2s+1)/4} 
\sum_{N=1}^\infty \sum_{k=1}^\infty u_2(N)\,  \left(\frac{k^2}{N}\right)^{(2s-1)/4}
K_{s-\frac12}\left(\pi k \sqrt{\frac{3N}{2}} \right). \label{S12}
\end{align}

\subsection{The sum $S_2(s)$} The analysis in this case is a little simpler because
the summation is over all integers $i$, $j$ and $k$.
We apply the gamma function integral~\eqref{gamma2} 
to write
\begin{align}
S_2(s) &= \sum_{i,j,k} \frac{1}{((i+\frac13)^2+(i+\frac13)(j+\frac13)+(j+\frac13)^2
+\frac83(k+\frac12)^2)^s} \nonumber \\
&= \frac{(2\pi)^s}{\Gamma(s)} 
\int_0^\infty x^{s-1}
  \sum_{k=-\infty}^\infty e^{-16\pi(k+\frac12)^2x/3}
 \sum_{i,j=-\infty}^\infty e^{-2\pi ((i+\frac13)^2+(i+\frac13)(j+\frac13)+(j+\frac13)^2)x}
 \, \ud x. \label{S2a}
 \end{align}
Now make use of the transformation formula ~\eqref{theta5a}
to deduce
\begin{align*}
S_2(s)
&= \frac{(2\pi)^s}{\Gamma(s)} 
\int_0^\infty x^{s-1}
 \left(2 \sum_{k=0}^\infty e^{-16\pi(k+\frac12)^2x/3}\right)
\left(\frac{1}{x\sqrt{3}} \sum_{i,j=-\infty}^\infty \omega^{i-j}e^{-2\pi (i^2+ij+j^2)/3x}\right)
 \, \ud x \\
\intertext{where $\omega=e^{2\pi i/3}$ is a primitive cube root of 1. Now separate the term $i=j=0$ to deduce}
S_2(s) 
&= \frac{(2\pi)^s}{\Gamma(s)} \frac{2}{\sqrt{3}}
\int_0^\infty x^{s-2}
  \sum_{k=0}^\infty e^{-16\pi(k+\frac12)^2x/3}
 \, \ud x \\
&\quad +\frac{(2\pi)^s}{\Gamma(s)} \frac{2}{\sqrt{3}}
\int_0^\infty x^{s-2}
  \sum_{k=0}^\infty e^{-16\pi(k+\frac12)^2x/3}\;\;
\sum_{N=1}^\infty  \omega^N\, u_2(N)\,  e^{-2\pi N/3x}
 \, \ud x
 \end{align*}
 where $u_2(N)$ is the number of representations of $N$ by the form $i^2+ij+j^2$, as before.
 On evaluating the integrals using~\eqref{gamma2} and~\eqref{bessel1} we obtain
\begin{align*}
S_2(s)
&= \frac{4\pi}{\sqrt{3}}\left(\frac38\right)^{s-1} \left(\frac{1}{s-1}\right)\,
\sum_{k=0}^\infty \frac{1}{(k+\frac12)^{2s-2}}\\
&\quad + \frac{8}{\sqrt{3}} \, \frac{\pi^s}{\Gamma(s)} 
\sum_{k=0}^\infty \sum_{N=1}^\infty  \omega^N \,u_2(N)\,
\left(\frac{N}{2(k+\frac12)^2}\right)^{(s-1)/2}
K_{s-1}\left(\frac{8\pi \left(k+\frac12\right) }{3} \sqrt{2N}\right).
\end{align*}
The first sum can be evaluated in terms of the Riemann zeta function by using~\eqref{zeta4a}.
The $\omega^N$ term can be handled by using
$$
\omega^N = \cos\frac{2\pi N}{3} + i \sin\frac{2\pi N}{3}
$$
along with the fact that $S_2(s)$ is real valued when $s$ is real. It follows that
\begin{align}
S_2(s) &=\frac{4\pi}{\sqrt{3}}\left(\frac38\right)^{s-1}(2^{2s-2}-1) \left(\frac{1}{s-1}\right) \zeta(2s-2)\nonumber \\
&\quad + \frac{8}{\sqrt{3}} \, \frac{\pi^s}{\Gamma(s)} 
\sum_{k=0}^\infty \sum_{N=1}^\infty  \cos\frac{2\pi N}{3} \,u_2(N)\,
\left(\frac{N}{2(k+\frac12)^2}\right)^{(s-1)/2}
K_{s-1}\left(\frac{8\pi \left(k+\frac12\right) }{3} \sqrt{2N}\right).
\label{S21}
\end{align}

\subsection{A second formula for the sum $S_2(s)$} 
We introduce the abbreviation
$$
Y_{i,j}= \left(i+\frac13\right)^2+\left(i+\frac13\right)\left(j+\frac13\right)+\left(j+\frac13\right)^2
$$
to write~\eqref{S2a} in the form
\begin{align*}
S_2(s) 
&= \frac{(2\pi)^s}{\Gamma(s)} 
\int_0^\infty x^{s-1}
 \sum_{i,j=-\infty}^\infty e^{-2\pi Y_{i,j}x}
  \sum_{k=-\infty}^\infty e^{-16\pi(k+\frac12)^2x/3}
 \, \ud x.
 \intertext{This time we apply the transformation formula~\eqref{theta1b} to the sum over~$k$ to obtain}
S_2(s) &= \frac{\sqrt{3}}{4}\,\frac{(2\pi)^s}{\Gamma(s)} 
\int_0^\infty x^{s-3/2}
 \sum_{i,j=-\infty}^\infty e^{-2\pi Y_{i,j}x}
  \sum_{k=-\infty}^\infty (-1)^k\,e^{-3\pi k^2/16x}
 \, \ud x.
 \end{align*}
Now separate the terms according to whether $k=0$ or $k \neq 0$ and evaluate the resulting integrals by~\eqref{gamma2} and~\eqref{bessel1}.
The result is
\begin{align*}
S_2(s)
&= \sqrt{\frac{3\pi}{8}} \, \frac{\Gamma(s-\frac12)}{\Gamma(s)} \,
\sum_{i,j=-\infty}^\infty \frac{1}{Y_{ij}^{s-1/2}}\\
& \quad + \frac{4\pi^s}{\Gamma(s)} \left(\frac38\right)^{(2s+1)/4}
\sum_{k=1}^\infty (-1)^k \sum_{i,j=-\infty}^\infty \left(\frac{k}{\sqrt{Y_{ij}}}\right)^{s-\frac12}
K_{s-\frac12}\left(\pi k \sqrt{3Y_{i,j}/2}\right).
\end{align*}
The first sum can be handled by~\eqref{zeta8} to give
$$\sum_{i,j=-\infty}^\infty \frac{1}{Y_{ij}^{s-1/2}}
= 3(3^{s-1/2}-1)\zeta\left(s-\frac12\right) L_{-3}\left(s-\frac12\right).
$$
For the other sum, observe that
$$
3Y_{i,j} = 3i^2+3ij+3j^2+3i+3j+1,
$$
that is to say $3Y_{i,j}$ is a positive integer and $3Y_{i,j} \equiv 1 \pmod{3}$. Therefore we set $3Y_{i,j}=3N+1$ and use~\eqref{theta2b}
to deduce that
the number of solutions of 
$$
3i^2+3ij+3j^2+3i+3j+1=3N+1
$$
is equal to $\frac12u_2(3N+1)$.

Taking all of the above into account, we finally obtain
\begin{align}
S_2(s) &= \sqrt\frac{27\pi}{8} \;\frac{\Gamma(s-\frac12)}{\Gamma(s)}\;(3^{s-1/2}-1)\;
\zeta\left(s-\frac12\right)L_{-3}\left(s-\frac12\right) \nonumber \\
& \quad + \frac{2\pi^s}{\Gamma(s)} \left(\frac38\right)^{(2s+1)/4}
\sum_{k=1}^\infty  \sum_{N=0}^\infty(-1)^k\, u_2(3N+1) \nonumber \\
&\qquad\qquad\qquad \times \left(\frac{k}{\sqrt{N+\frac13}}\right)^{s-\frac12}
K_{s-\frac12}\left(\pi k \sqrt{\frac{3N+1}{2}}\right).
\label{S22}
\end{align}

\subsection{The lattice sum for hexagonal close packing}
On adding the results for $S_1(s)$ and $S_2(s)$ in~\eqref{S11} and~\eqref{S21} we obtain
\begin{align}
L_{3}^{\text{HCP}}(s)  
&= 6\zeta(s) \,L_{-3}(s)
+\frac{4\pi}{\sqrt{3}}\left(\frac32\right)^{s-1} \left(\frac{1}{s-1}\right) \zeta(2s-2)
\nonumber \\
&\quad
+ \frac{8}{\sqrt{3}} \, \frac{\pi^s}{\Gamma(s)} 
\sum_{k=1}^\infty \sum_{N=1}^\infty u_2(N)
\left(\frac{N}{2k^2}\right)^{(s-1)/2}
K_{s-1}\left(\frac{8\pi k}{3} \sqrt{2N}\right) \nonumber \\
&\quad + \frac{8}{\sqrt{3}} \, \frac{\pi^s}{\Gamma(s)} 
\sum_{k=0}^\infty \sum_{N=1}^\infty  \cos\frac{2\pi N}{3} \,u_2(N)\,\nonumber \\
&\qquad\qquad\qquad \times\left(\frac{N}{2(k+\frac12)^2}\right)^{(s-1)/2}
K_{s-1}\left(\frac{8\pi \left(k+\frac12\right) }{3} \sqrt{2N}\right).
\label{hcp1}
\end{align}

\noindent
On the other hand, if we add the results of~\eqref{S12} and~\eqref{S22} we obtain
\begin{align}
L_{3}^{\text{HCP}}(s)  
&=2\left(\frac38\right)^s \zeta(2s)
+ \sqrt\frac{27\pi}{8} \;\frac{\Gamma(s-\frac12)}{\Gamma(s)}\;(3^{s-1/2}+1)\;
\zeta\left(s-\frac12\right)L_{-3}\left(s-\frac12\right)
\nonumber \\
&\quad
+ \frac{4\pi^s}{\Gamma(s)}\,\left(\frac38\right)^{(2s+1)/4} 
\sum_{N=1}^\infty \sum_{k=1}^\infty u_2(N)\,  \left(\frac{k}{\sqrt{N}}\right)^{s-1/2}
K_{s-\frac12}\left(\pi k \sqrt{\frac{3N}{2}} \right)
\nonumber \\
& \quad + \frac{2\pi^s}{\Gamma(s)} \left(\frac38\right)^{(2s+1)/4}
\sum_{k=1}^\infty  \sum_{N=0}^\infty(-1)^k\, u_2(3N+1) \nonumber \\
&\qquad\qquad \qquad \times \left(\frac{k}{\sqrt{N+\frac13}}\right)^{s-\frac12}
K_{s-\frac12}\left(\pi k \sqrt{\frac{3N+1}{2}}\right)
\label{hcp2}
\end{align}

\section{Analytic continuations of the lattice sums $L(A;s)$ and $L_{3}^{\text{HCP}}(s)$}
\label{pole}
We will now show that the lattice sums $L(A;s)$ and $L_{3}^{\text{HCP}}(s)$
can be continued analytically to the whole $s$-plane, and that the resulting functions have a single simple pole
at $s=3/2$ and no other singularities. We do this in steps. First, in Section~\ref{6point1} we show that
the lattice sums each have a simple pole at $s=3/2$ and determine the residue. Then, in Section~\ref{6point2}
we show that the analytic continuations obtained are valid for the whole $s$-plane and there are no other singularities.
Finally, in Sections~\ref{sections1half}--\ref{sequals0}, values of the analytic continuations at the points $s=1/2$ and $s=1,\,0,\,-1,\,-2,\ldots$
are computed. In particular, the evaluation of $T_2(A;s)$ at $s=1/2$ in the case $A=1$ gives the Madelung constant,
e.g., see~\cite{borwein1998convergence},
~\cite[pp. xiii, 39--51]{BorweinEtAl},~\cite{madelung1918}.
\subsection{Behaviour of the lattice sums at $s=3/2$}
\label{6point1}
We start by showing that $L(A;s)$ has a simple pole at $s=3/2$ and determine the residue.
In the formula~\eqref{L3formula1}, all of the terms are analytic at $s=3/2$ except for the term involving $\zeta(2s-2)$.
It follows that
\begin{align*}
\lim_{s\rightarrow 3/2} (s-3/2)L(A;s) 
&= \lim_{s\rightarrow 3/2} (s-3/2)\frac{\pi A}{s-1} \left(1+\frac{1}{A}\right)^s \zeta(2s-2)  \\
&=2\pi A \left(1+\frac{1}{A}\right)^{3/2} \lim_{s\rightarrow 3/2} (s-3/2) \zeta(2s-2) \\
&=  \frac{2\pi}{\sqrt{A}} \left(A+1\right)^{3/2} \times \frac12 \; \lim_{u\rightarrow 1} (u-1) \zeta(u) \\
&= \frac{\pi}{\sqrt{A}}  \left(A+1\right)^{3/2}
\end{align*}
where~\eqref{zetapole} was used in the last step of the calculation.
It follows further that $L(A;s)$ has a simple pole at $s=3/2$ and the residue is given by
$$
\text{Res}(L(A;s),3/2) = \frac{\pi}{\sqrt{A}}  \left(A+1\right)^{3/2}.
$$
By~\eqref{pdA} this is just 12 times the packing density, i.e.,
$$
\text{Res}(L(A;s),3/2) = 12\Delta_\mathcal{L}.
$$
For example, taking $A=1$ gives
\begin{equation}
\label{residue1}
\text{Res}(L_3^{\text{FCC}}(s),3/2)  = 2\sqrt{2}\,\pi
\end{equation}
while taking $A=1/2$ gives
 $$
\text{Res}(L_3^{\text{BCC}}(s),3/2)  = 3\sqrt{3}\,\pi/2.
$$
Laurent's theorem implies there is an expansion of the form
\begin{equation}
\label{laurent1}
L(A;s) = \frac{c_{-1}}{s-3/2}+c_0+\sum_{n=1}^\infty c_n (s-3/2)^n
\end{equation}
where
$$
c_{-1} = \text{Res}(L(A;s),3/2) = \frac{\pi}{\sqrt{A}}  \left(A+1\right)^{3/2}
$$
and the coefficients $c_0$, $c_1$, $c_2,\ldots$ depend on $A$ but not on $s$.
To calculate~$c_0$, start with the fact that
$$
\lim_{s\rightarrow 3/2} \left(\frac{\pi A}{s-1} \left(1+\frac{1}{A}\right)^s \zeta(2s-2) - \frac{c_{-1}}{s-3/2}\right) 
$$
$$
= \frac{\pi}{\sqrt{A}} \left(A+1\right)^{3/2} \left(2\gamma - 2 + \log\left(1+\frac{1}{A}\right)\right)
$$
where $\gamma = 0.57721\,56649\,01532\,86060\,\cdots$ is Euler's constant. Then use~\eqref{L3formula1} and~\eqref{bessel2} to deduce
\begin{align*}
c_0 &= \lim_{s\rightarrow 3/2} \left( L(A;s) - \frac{c_{-1}}{s-3/2}\right) \\
&=\sqrt{2}\left(A+1\right)^{3/2} \zeta\left(\frac32\right)L_{-4}\left(\frac32\right) \\
&\quad + \frac{\pi}{\sqrt{A}} \left(A+1\right)^{3/2} \left(2\gamma - 2 + \log\left(1+\frac{1}{A}\right) \right) \\
&\quad + \frac{2\pi}{\sqrt{A}} \left(A+1\right)^{3/2} \sum_{k=1}^\infty\sum_{N=1}^\infty  \frac{1}{k} \, r_2(N) \exp\left(-2\pi k \sqrt{AN}\right) \\
&\quad + \frac{2\pi}{\sqrt{A}} \left(A+1\right)^{3/2} \sum_{k=1}^\infty\sum_{N=0}^\infty  \frac{(-1)^k}{k}\,r_2(4N+1)\, \exp\left(-2\pi k \sqrt{A\left(2N+\frac12\right)}\right).
\intertext{Interchanging the order of summation and evaluating the sum over~$k$ gives}
c_0 &=\sqrt{2}\left(A+1\right)^{3/2} \zeta\left(\frac32\right)L_{-4}\left(\frac32\right) \\
&\quad + \frac{\pi}{\sqrt{A}} \left(A+1\right)^{3/2} \left(2\gamma - 2 + \log\left(1+\frac{1}{A}\right) \right) \\
&\quad - \frac{2\pi}{\sqrt{A}} \left(A+1\right)^{3/2} \sum_{N=1}^\infty   r_2(N) \log\left(1- e^{-2\pi \sqrt{AN}}\right) \\
&\quad - \frac{2\pi}{\sqrt{A}} \left(A+1\right)^{3/2} \sum_{N=0}^\infty  r_2(4N+1)\, \log\left(1+ e^{-\pi  \sqrt{2A\left(4N+1\right)}}\right).
\end{align*}
Numerical evaluation in the case $A=1$ gives
\begin{equation}
\label{c0}
\left. c_0\right|_{A=1} = 6.98405\,25503\,22247\,93406\cdots.
\end{equation}
A similar analysis can be given for $L_3^{\text{HCP}}(s)$ using~\eqref{hcp1}. We omit the details of the calculations as they are similar to the above.
The end result is a Laurent expansion of the form
\begin{equation}
\label{laurent2}
L_3^{\text{HCP}}(s) = \frac{d_{-1}}{s-3/2}+d_0+\sum_{n=1}^\infty d_n (s-3/2)^n
\end{equation}
where
\begin{equation}
\label{residue2}
d_{-1} = \text{Res}(L_3^{\text{HCP}}(s),3/2) = 2\sqrt{2}\pi
\end{equation}
and
\begin{align}
d_0
&= 6\zeta\left(\frac32\right) L_{-3}\left(\frac32\right) \;+ \; 2\,\sqrt{2}\,\pi \left(2\gamma-2+\log\frac32\right) \nonumber \\
&\quad
+ 2\sqrt{2}\pi \sum_{k=1}^\infty \sum_{N=1}^\infty \frac{1}{k}\,u_2(N)
\exp\left(-\frac83\pi k \sqrt{2N}\right) \nonumber \\
&\quad + 2\sqrt{2}\pi 
\sum_{k=0}^\infty \sum_{N=1}^\infty  \frac{\cos\left(\frac{2\pi N}{3}\right)}{(k+\frac12)} \,u_2(N)\,
\exp\left(-\frac83\pi \left(k+\frac12\right)\, \sqrt{2N}\right) \nonumber \\
&= 6\zeta\left(\frac32\right) L_{-3}\left(\frac32\right) \; + \; 2\,\sqrt{2}\,\pi \left(2\gamma-2+\log\frac32\right) \nonumber \\
&\quad
- 2\sqrt{2}\pi  \sum_{N=1}^\infty u_2(N) \log\left(1-e^{-\frac83\pi  \sqrt{2N}}\right) \nonumber \\
&\quad + 2\sqrt{2}\pi  \sum_{N=1}^\infty  \cos\left(\frac{2\pi N}{3}\right) \,u_2(N)\,
\log\left(\frac{1+e^{-\frac43\pi \sqrt{2N}}}{1-e^{-\frac43\pi \sqrt{2N}}}\right) \nonumber \\
&=  6.98462\;37414\;38416\;61307\;\cdots.
\label{d0}
\end{align}
In particular, the pole of $L_3^{\text{HCP}}(s)$ at $s=3/2$ is simple.
By~\eqref{residue1} and~\eqref{residue2} we have
$$
\text{Res}(L_3^{\text{HCP}}(s),3/2)= \text{Res}(L_3^{\text{FCC}}(s),3/2).
$$
It follows that the difference
\mbox{$L_3^{\text{FCC}}(s) - L_{3}^{\text{HCP}}(s)$}
has a removable singularity at $s=3/2$
and from the Laurent expansions we deduce that
$$
\lim_{s\rightarrow \frac32} \left(L_3^{\text{FCC}}(s) - L_{3}^{\text{HCP}}(s) \right) = \left.c_0\right|_{A=1} - d_0.
$$
Using the numerical values from~\eqref{c0} and~\eqref{d0} we obtain
$$
\lim_{s\rightarrow \frac32} \left(L_3^{\text{FCC}}(s) - L_{3}^{\text{HCP}}(s) \right) = -0.00057\;11911\;16168\;67901\cdots.
$$
This gives the value at the left hand end of the graph in \cite[Fig. 3]{paper1}. The value $s=3/2$ used here corresponds to taking $s=3$ in~\cite{paper1}
because of the different way the exponents are used in the definitions.

\subsection{Analyticity of the lattice sums at other values of $s$}
\label{6point2}
By~\eqref{bessel3}, the double series of Bessel functions in~\eqref{L3formula1} converges absolutely and uniformly on compact subsets of the $s$-plane
and therefore represents an entire function of~$s$. It follows that $L(A;s)$ has an analytic continuation to a meromorphic function
which is analytic except possibly at the singularities of the terms
\begin{equation}
\label{sing1}
4\left(\frac{A+1}{2}\right)^s \zeta(s)L_{-4}(s)
\end{equation}
and
\begin{equation}
\label{sing2}
\frac{\pi A}{s-1} \left(1+\frac{1}{A}\right)^s \zeta(2s-2).
\end{equation}
The function in \eqref{sing1} is analytic except at $s=1$ due to the pole of~$\zeta(s)$, as the function $L_{-4}(s)$ and the exponential function are both entire.
The function in~\eqref{sing2} is
analytic except at $s=1$ and $s=3/2$. We studied the singularity at $s=3/2$ in the previous section, so this
leaves only the point $s=1$. Using~\eqref{zetapole} and the values of $\zeta(0)$ and $L_{-4}(1)$ in~\eqref{zetavalues} and~\eqref{L4values} we find that
$$
4\left(\frac{A+1}{2}\right)^s \zeta(s)L_{-4}(s)  = \frac{(A+1)\pi}{2(s-1)}+O(1) \quad\text{as $s\rightarrow 1$}
$$
and 
$$
\frac{\pi A}{s-1} \left(1+\frac{1}{A}\right)^s \zeta(2s-2)= -\frac{(A+1)\pi}{2(s-1)}+O(1) \quad\text{as $s\rightarrow 1$}.
$$
It follows that the sum of the functions in~\eqref{sing1} and~\eqref{sing2} has a removable singularity at~$s=1$ and
so $L(A;s)$ is also analytic at~$s=1$. The analyticity at $s=1$ can also be seen directly from the alternative formula for~$L(A;s)$ in~\eqref{L3formula2}.

In conclusion, it has been shown that $L(A;s)$ has an analytic continuation to a meromorphic function of~$s$ which has a simple pole at~$s=3/2$
and no other singularities. Because $L(A;s)$ has only one singularity, namely $s=3/2$, the Laurent expansion~\eqref{laurent1} is valid
in the annulus $0<|s-3/2|<\infty$, i.e., for all $s \neq 3/2$. 

In a similar way,~\eqref{hcp1} and~\eqref{hcp2} can be used to show that~$L_{3}^{\text{HCP}}(s)$ also
has an analytic continuation to a meromorphic function of~$s$ which has a simple pole at~$s=3/2$
and no other singularities. The Laurent expansion~\eqref{laurent2} converges for all $s\neq 3/2$.

By the theory of complex variables, the analytic continuation, if one exists, is unique, e.g., see~\cite[p. 147, Th. 1]{levinson}.
Therefore analytic continuation formulas can be used to assign values to divergent series.
For example, the Madelung constant is defined by
\begin{equation}
\label{madelungdefinition}
M={\sum_{i,j,k}}^{\prime} \;\;\;\left. \frac{(-1)^{i+j+k}}{(i^2+j^2+k^2)^{s}}\right|_{s=1/2}. 
\end{equation}
This is interpreted as being the value of the analytic continuation of the series at $s=1/2$, because 
$$
{\sum_{i,j,k}}^{\prime} \;\;\; \frac{(-1)^{i+j+k}}{(i^2+j^2+k^2)^{s}}
$$
obviously diverges if $s=1/2$. From now on, we shall use the expression ``the value of a series at a point~$s$'' to mean ``the value of the
analytic continuation of the series at the point~$s$''.

\subsection{Values at $s=1/2$ and the Madelung constant}
\label{sections1half}
On putting $s=1/2$ in~\eqref{sumpart3} we obtain an analytic expression for the value of 
\begin{align*}
\lefteqn{M(A)={\sum_{i,j,k}}^{\prime} \;\;\;\left. \frac{(-1)^{i+j+k}}{(Ai^2+j^2+k^2)^{s}}\right|_{s=1/2}}
\intertext{which specialises to the Madelung constant in the case $A=1$. We have}
M(A)&=  -4(1-2^{1-s})\zeta(s)L_{-4}(s) \Bigg|_{s=1/2} \nonumber \\
&+\frac{4\pi^s}{\Gamma(s)} \,A^{(1-s)/2}
\sum_{i=1}^\infty\sum_{N=0}^\infty  
(-1)^ir_2(4N+1) \left(\frac{2N+\frac12}{i^2}\right)^{(s-1)/2} \\
&\qquad\qquad \qquad\qquad\qquad \times K_{s-1}\left(2\pi i \sqrt{A(2N+\frac12)}\,\right) \Bigg|_{s=1/2}.
\end{align*}
Now use~\eqref{bessel1.5} and \eqref{bessel2} to express the Bessel functions in terms of exponential functions. The result simplifies to 
\begin{align*}
M(A)&=4(\sqrt2-1)\zeta\left(\frac12\right)L_{-4}\left(\frac12\right)
+2\sum_{i=1}^\infty\sum_{N=0}^\infty  (-1)^i\, \frac{r_2(4N+1)}{\sqrt{2N+\frac12}}\,e^{-2\pi i \sqrt{A(2N+1/2)}}.
\intertext{On interchanging the order of summation and summing the geometric series, we obtain}
M(A)&=4(\sqrt2-1)\zeta\left(\frac12\right)L_{-4}\left(\frac12\right)
-2\sqrt{2}\sum_{N=0}^\infty  \frac{r_2(4N+1)}{\sqrt{4N+1}}\left(\frac{1}{e^{\pi  \sqrt{2A(4N+1)}}+1}\right).
\end{align*}
When $A=1$ this gives the Madelung constant defined by~\eqref{madelungdefinition}.
Numerical evaluation gives
\begin{equation}
\label{Mminus}
M=M(1)=-1.74756\;45946\;33182\;19063\cdots
\end{equation}
which is in agreement with~\cite[p. xiii]{BorweinEtAl} (apart from the minus sign which we have corrected here) and matches the value of $d(1)$
in~\cite[pp 39--51]{BorweinEtAl}.

\bigskip
In a similar way, starting from~\eqref{sumpart1} and using~\eqref{bessel2} and~\eqref{zetavalues} we obtain
\begin{align}
\lefteqn{{\sum_{i,j,k}}^{\prime} \;\;\;\left. \frac{1}{(Ai^2+j^2+k^2)^{s}}\right|_{s=1/2} } \nonumber \\
&=4 \zeta\left(\frac12\right)L_{-4}\left(\frac12\right)+\frac{\pi\sqrt{A}}{3} +2\sum_{i=1}^\infty\sum_{N=1}^\infty  \frac{r_2(N)}{\sqrt{N}}\,e^{-2\pi i \sqrt{AN}} \nonumber \\
&=4 \zeta\left(\frac12\right)L_{-4}\left(\frac12\right)+\frac{\pi\sqrt{A}}{3} +2\sum_{N=1}^\infty  \frac{r_2(N)}{\sqrt{N}}\left(\frac{1}{e^{2\pi \sqrt{AN}}-1}\right).
\label{remarkable1}
\end{align}
Numerical evaluation in the case $A=1$ gives
\begin{equation}
{\sum_{i,j,k}}^{\prime} \;\;\;\left. \frac{1}{(i^2+j^2+k^2)^{s}}\right|_{s=1/2} = -2.83729\;74794\;80619\;47666\cdots. \label{remarkable3}
\end{equation}
Now, from~\eqref{sum3} we have
\begin{align*}
L_{3}^{\text{FCC}}(1/2)&= \frac{1}{\sqrt{2}}{\sum_{i,j,k}}^{\prime} \left. \frac{1}{(i^2+j^2+k^2)^{s}}\right|_{s=1/2} 
+  \frac{1}{\sqrt{2}}{\sum_{i,j,k}}^{\prime} \left. \frac{(-1)^{i+j+k}}{(i^2+j^2+k^2)^{s}}\right|_{s=1/2}.
\intertext{Hence, using the values from~\eqref{Mminus} and~\eqref{remarkable3} we obtain}
L_{3}^{\text{FCC}}(1/2)&= -3.24198\;70634\;10888\;39428\cdots.
\end{align*}
We also record the result
\begin{align*}
L_{3}^{\text{HCP}}(1/2)
&= 6\zeta\left(\frac12\right)L_{-3}\left(\frac12\right) + \frac{2\sqrt{2}\pi}{9}
+2\sum_{N=1}^\infty \frac{u_2(N)}{\sqrt{N}}\left(\frac{1}{e^{8\pi\sqrt{2N}/3}-1}\right) \\
&\qquad +2\sum_{N=1}^\infty  \cos \left(\frac{2\pi N}{3}\right) \frac{u_2(N)}{\sqrt{N}}\left(\frac{1}{e^{4\pi\sqrt{2N}/3}-e^{-4\pi\sqrt{2N}/3}}\right) \\
&= -3.24185\;86150\;75732\;86473\cdots
\end{align*}
which is obtained in the same way, starting from~\eqref{hcp1}.

\subsection{The value at $s=1$}
\label{sections1}
It was noted above that~\eqref{L3formula1}, which involves $K_{s-1}$ Bessel functions, contains terms with
singularities at $s=1$ and therefore is not suitable for calculations at that value of~$s$.
Instead we can use~\eqref{L3formula2}, which involves $K_{s-1/2}$ Bessel functions.
As in the previous section, two steps are involved. First, the the $K_{1/2}$ Bessel functions can be expressed in terms of the exponential function by~\eqref{bessel2}. Then,
the double sum can be reduced to a single sum by geometric series. We omit the details and just record the final results and corresponding numerical
values.

\noindent
From~\eqref{sumpart2} we have
\begin{align}
\lefteqn{{\sum_{i,j,k}}^{\prime} \;\;\;\left. \frac{1}{(Ai^2+j^2+k^2)^{s}}\right|_{s=1}} \nonumber \\
&=\frac{\pi^2}{3A} + \frac{4\pi}{\sqrt{A}} \zeta\left(\frac12\right) L_{-4}\left(\frac12\right)
+\frac{2\pi}{\sqrt{A}} \sum_{N=1}^\infty \frac{r_2(N)}{\sqrt{N}} \left(\frac{1}{e^{2\pi \sqrt{N/A}}-1}\right) \label{remarkable2}
\intertext{while~\eqref{sumpart4}  gives}
\lefteqn{{\sum_{i,j,k}}^{\prime} \;\;\;\left. \frac{(-1)^{i+j+k}}{(Ai^2+j^2+k^2)^{s}}\right|_{s=1}} \nonumber\\
&=\frac{-\pi^2}{6A}  +\frac{2\pi}{\sqrt{A}} \sum_{N=1}^\infty (-1)^N\, \frac{r_2(N)}{\sqrt{N}} \left(\frac{1}{e^{\pi \sqrt{N/A}}-e^{-\pi \sqrt{N/A}}}\right).
\nonumber
\end{align}
Then~\eqref{sum3} can be used to write down the value of $L(A;s)$.

For example, when $A=1$ the above formulas give
\begin{align}
{\sum_{i,j,k}}^{\prime} \;\;\;\left. \frac{1}{(i^2+j^2+k^2)^{s}}\right|_{s=1}
&=-8.91363\;29175\;85151\;27268\cdots \label{remarkable4}
\intertext{and} \nonumber
{\sum_{i,j,k}}^{\prime} \;\;\;\left. \frac{(-1)^{i+j+k}}{(i^2+j^2+k^2)^{s}}\right|_{s=1}
&=-2.51935\;61520\;89445\;31334\cdots. \label{unremarkable2}
\end{align}
Then taking $A=1$ and $s=1$ in~\eqref{sum3} gives
\begin{align*}
L_3^{\text{FCC}}(1) &={\sum_{i,j,k}}^{\prime} \;\;\;\left. \frac{1}{(i^2+j^2+k^2)^{s}}\right|_{s=1} + {\sum_{i,j,k}}^{\prime} \;\;\;\left. \frac{(-1)^{i+j+k}}{(i^2+j^2+k^2)^{s}}\right|_{s=1} \\
&=  -11.43298\;90696\;74596\;58602\cdots.
\end{align*}
For HCP, the formula~\eqref{hcp1} cannot be used to evaluate~$L_3^{\text{HCP}}(1)$ because two of the terms have cancelling singularities at~$s=1$.
Therefore we take~$s=1$ in~\eqref{hcp2} instead to obtain
\begin{align*}
L_{3}^{\text{HCP}}(1)  
&=\frac{\pi^2}{8}
+ \pi\;\sqrt\frac{27}{8} \;(\sqrt{3}+1)\;
\zeta\left(\frac12\right)\,L_{-3}\left(\frac12\right)
 \\
&\quad
+ \pi\, \sqrt{\frac{3}{2}}\,
\sum_{N=1}^\infty \frac{u_2(N)}{\sqrt{N}}\,
\left(\frac{1}{e^{\pi \sqrt{3N/2}}-1} \right)
 \\
& \quad -  \frac{3\pi}{\sqrt{8}}\,
\sum_{N=0}^\infty \frac{u_2(3N+1)}{\sqrt{3N+1}} 
\left(\frac{1}{e^{\pi \sqrt{(3N+1)/2}}+1} \right) \\
&= -11.43265\;30014\;95285\;63572\cdots.
\end{align*}
We end this section by noting a connection between two of the values in the above analysis.
By setting~$A=1$ in each of~\eqref{remarkable1} and~\eqref{remarkable2} we obtain the remarkable result
\begin{equation}
{\sum_{i,j,k}}^{\prime} \;\;\;\left. \frac{1}{(i^2+j^2+k^2)^{s}}\right|_{s=1}
=\pi \; {\sum_{i,j,k}}^{\prime} \;\;\;\left. \frac{1}{(i^2+j^2+k^2)^{s}}\right|_{s=1/2}.
\end{equation}
This is consistent with~\cite[p. 46 (1.3.44)]{BorweinEtAl} and is the special case~$s=1$ of the functional equation
\begin{equation}
\label{fe0}
\pi^{-s} \Gamma(s) T_1(1;s) = \pi^{-(\frac32-s)} \Gamma\left(\frac32-s\right) T_1\left(1;\frac32-s\right).
\end{equation}
This functional equation can be deduced from the two formulas for $T_1(A;s)$ in~\eqref{sumpart1} and~\eqref{sumpart2}, as follows.
Replace $s$ with $\frac32-s$ in~\eqref{sumpart1}, then multiply by $\pi^{s-\frac32}\Gamma(\frac32-s)$ and set $A=1$ to get
\begin{align*}
\lefteqn{\pi^{s-\frac32}\Gamma\left(\frac32-s\right)T_1\left(1;\frac32-s\right)} \\
&= 4\pi^{s-\frac32}\Gamma\left(\frac32-s\right)\zeta\left(\frac32-s\right)L_{-4}\left(\frac32-s\right)  +2\pi^{s-\frac12}\Gamma\left(\frac12-s\right) \zeta(1-2s) \nonumber \\
&\quad +4 \sum_{i=1}^\infty\sum_{N=1}^\infty  r_2(N) \left(\frac{N}{i^2}\right)^{(\frac12-s)/2} K_{\frac12-s}\left(2\pi i \sqrt{N}\right),
\end{align*}
where we have used the functional equation for the gamma function in the form
$$
\Gamma(3/2-s) = (1/2-s)\Gamma(1/2-s)
$$
to obtain the second term on the right hand side.
Now apply the functional equations~\eqref{bessel1.5},~\eqref{fe1} and \eqref{fe2} to deduce
\begin{align*}
\lefteqn{\pi^{-(\frac32-s)}\Gamma\left(\frac32-s\right)T_1\left(1;\frac32-s\right)} \\
&= 4\pi^{\frac12-s}\,\Gamma\left(s-\frac12\right)\zeta\left(s-\frac12\right)L_{-4}\left(s-\frac12\right)  +2\pi^{-s}\,\Gamma(s) \zeta(2s) \nonumber \\
&\quad +4 \sum_{i=1}^\infty\sum_{N=1}^\infty  r_2(N) \left(\frac{i}{\sqrt{N}}\right)^{s-\frac12} K_{s-\frac12}\left(2\pi i \sqrt{N}\right).
\end{align*}
The functional equation~\eqref{fe0} follows from this by using~\eqref{sumpart2}.
In addition to providing another proof of the functional equation, the calculation above also demonstrates the interconnection between the
formulas~\eqref{sumpart1} and~\eqref{sumpart2}. Further functional equations of this type are considered in~\cite[p. 46]{BorweinEtAl}.


\subsection{Values at $s=0,\,-1,\,-2,\,-3,\ldots$}
\label{sequals0}
\noindent
Recall from~\eqref{L3formula1} that
\begin{align*}
L(A;s) 
&= 4\left(\frac{A+1}{2}\right)^s \zeta(s)L_{-4}(s) + \frac{\pi A}{s-1} \left(1+\frac{1}{A}\right)^s \zeta(2s-2) \\
&\quad + \frac{2\pi^s\sqrt{A}}{\Gamma(s)} \left(\sqrt{A}+\frac{1}{\sqrt{A}}\right)^s \sum_{i=1}^\infty\sum_{N=1}^\infty  r_2(N) \left(\frac{N}{i^2}\right)^{(s-1)/2} K_{s-1}\left(2\pi i \sqrt{AN}\right) \\
&\quad + \frac{2\pi^s\sqrt{A}}{\Gamma(s)} \left(\sqrt{A}+\frac{1}{\sqrt{A}}\right)^s
\sum_{i=1}^\infty\sum_{N=0}^\infty  (-1)^ir_2(4N+1) \nonumber \\
&\qquad\qquad \times \left(\frac{2N+\frac12}{i^2}\right)^{(s-1)/2} K_{s-1}\left(2\pi i \sqrt{A(2N+\frac12)}\right).
\end{align*}
On using the values $\zeta(0) = -\frac12$, $\zeta(-2) = 0$, $L_{-4}(0) = \frac12$ and the limiting value
$$
\lim_{s\rightarrow 0} \frac{1}{\Gamma(s)} = 0
$$
we readily obtain the result
$$
L(A;0) = -1.
$$
Moreover, since
$$
\zeta(-2)=\zeta(-4)=\zeta(-6) = \cdots = 0,
$$
$$
L_{-4}(-1)=L_{-4}(-3)=\zeta(-5) = \cdots = 0,
$$
and
$$
\lim_{s\rightarrow N} \frac{1}{\Gamma(s)} = 0\quad \text{if $N=0,\, -1,\, -2,\, \cdots$}
$$
it follows that
$$
L(A;-1) = L(A;-2) = L(A;-3) = \cdots = 0.
$$
In a similar way, it can be shown using~\eqref{hcp1} that 
$$L_{3}^{\text{HCP}}(0) = -1
$$
and
$$
L_3^{\text{HCP}}(-1) = L_3^{\text{HCP}}(-2) = L_3^{\text{HCP}}(-3) = \cdots = 0.
$$

\vfill
\pagebreak[4]

\section{Graphs}
\label{graphs}
The formulas ~\eqref{L3formula1}, \eqref{L3formula2}, \eqref{hcp1} and~\eqref{hcp2} have been used to produce the following graphs of $y=L_{3}^{\text{FCC}}(s)$
on the intervals $-10<s<10$ and $-7<s<0$ in Figure~\ref{fig3}. The graph of $y=L_{3}^{\text{HCP}}(s)$ has a similar appearance, and so to allow a comparison the difference
$$
y=L_{3}^{\text{HCP}}(s) - L_{3}^{\text{FCC}}(s)
$$
is plotted using a finer vertical scale in Figure~4.

\begin{figure*}[htbp]
\begin{center}
\includegraphics[scale=0.3, trim=4cm 0 0 0]{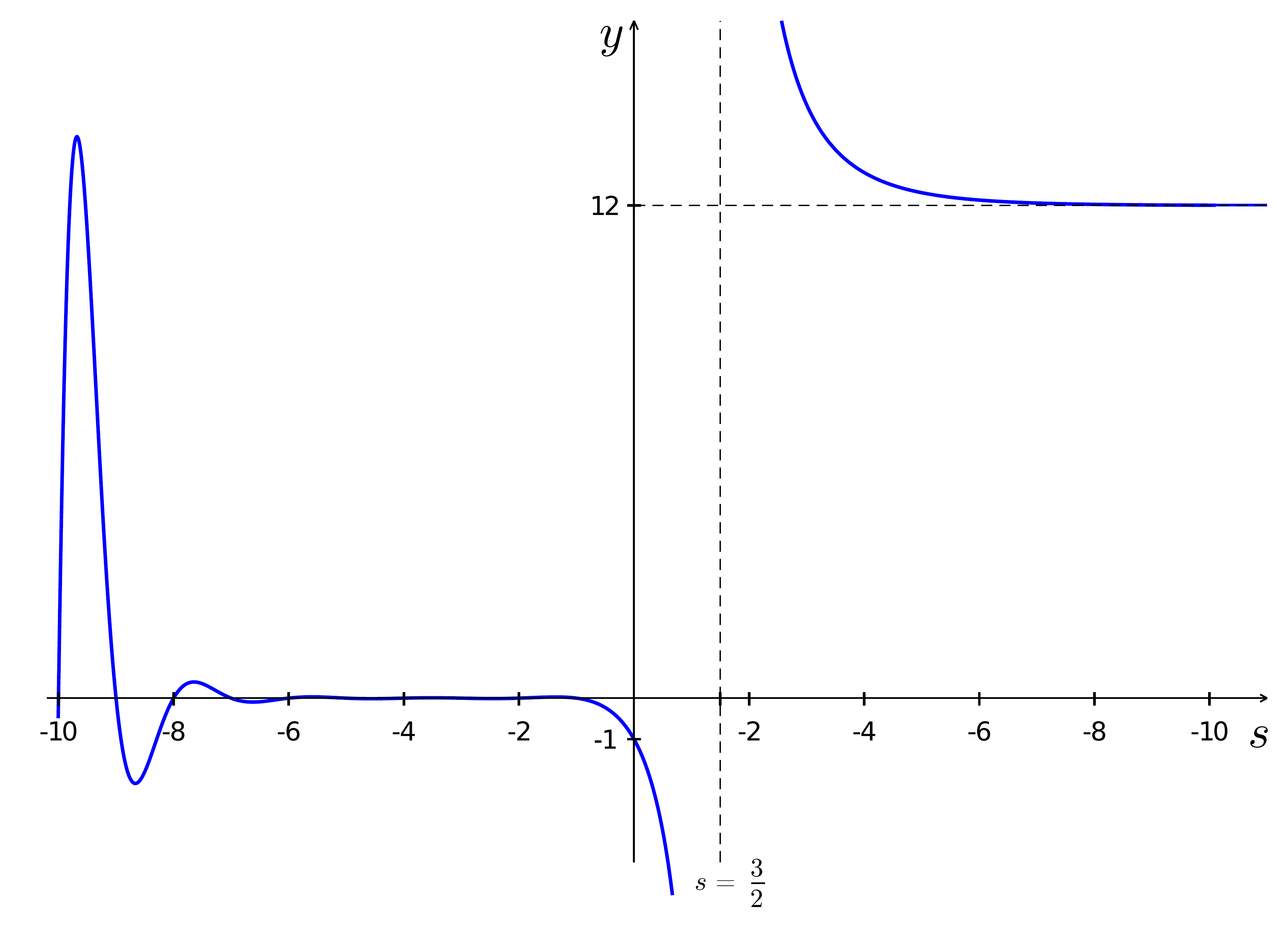}\\
\caption{Graph of $y=L_{3}^{\text{FCC}}(s)$ for $-10<s<10$.}
\label{fig3}
\end{center}
\end{figure*}

\begin{figure*}[htbp]
\begin{center}
\includegraphics[scale=0.3, trim=4cm 0 0 0]{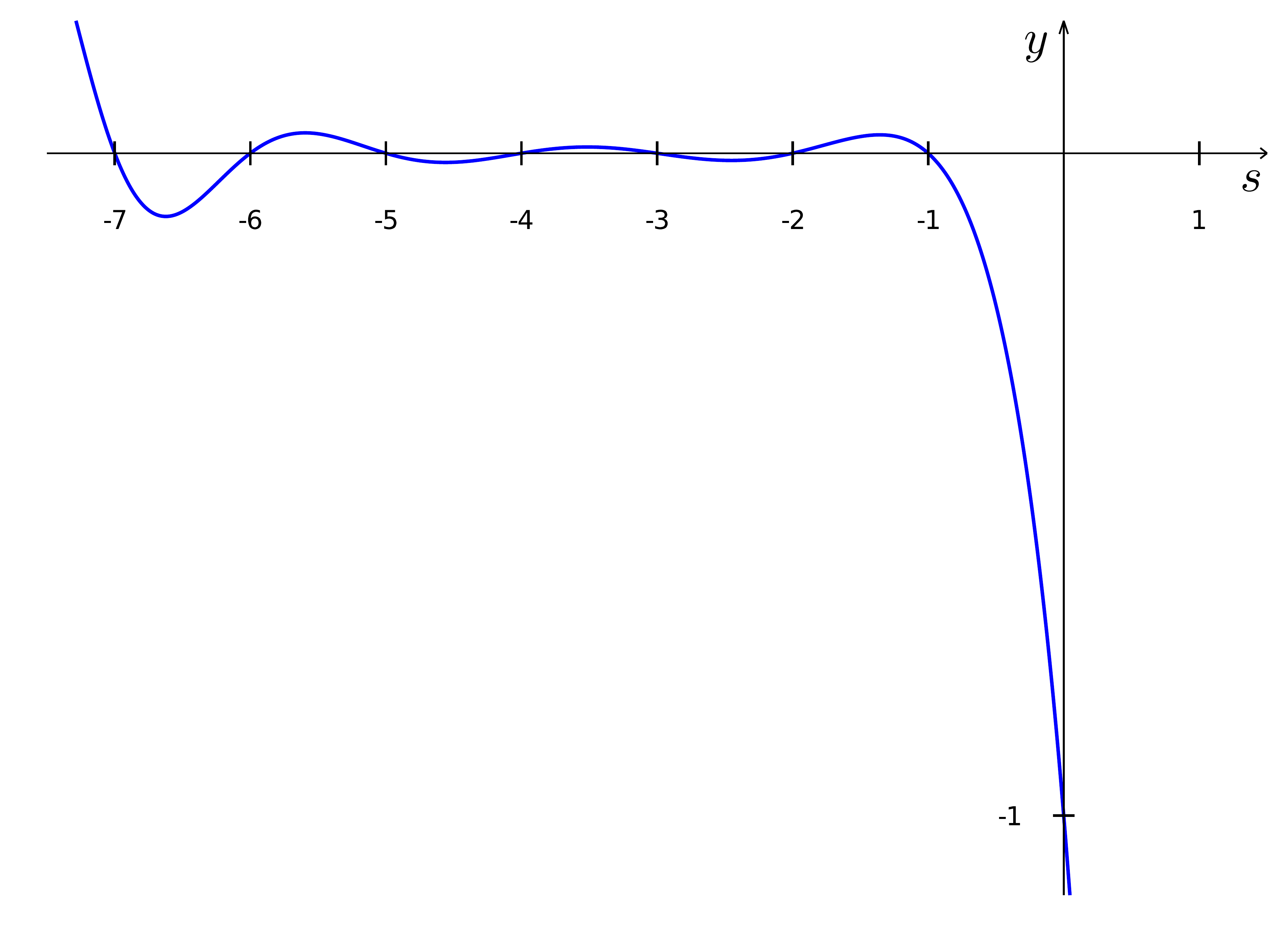}\\
\caption{Graph of $y=L_{3}^{\text{FCC}}(s)$ for $-7<s<0$.}
\label{fig4}
\end{center}
\end{figure*}

\begin{figure*}[htbp]
\begin{center}
\includegraphics[scale=0.3, trim=4cm 0 0 0]{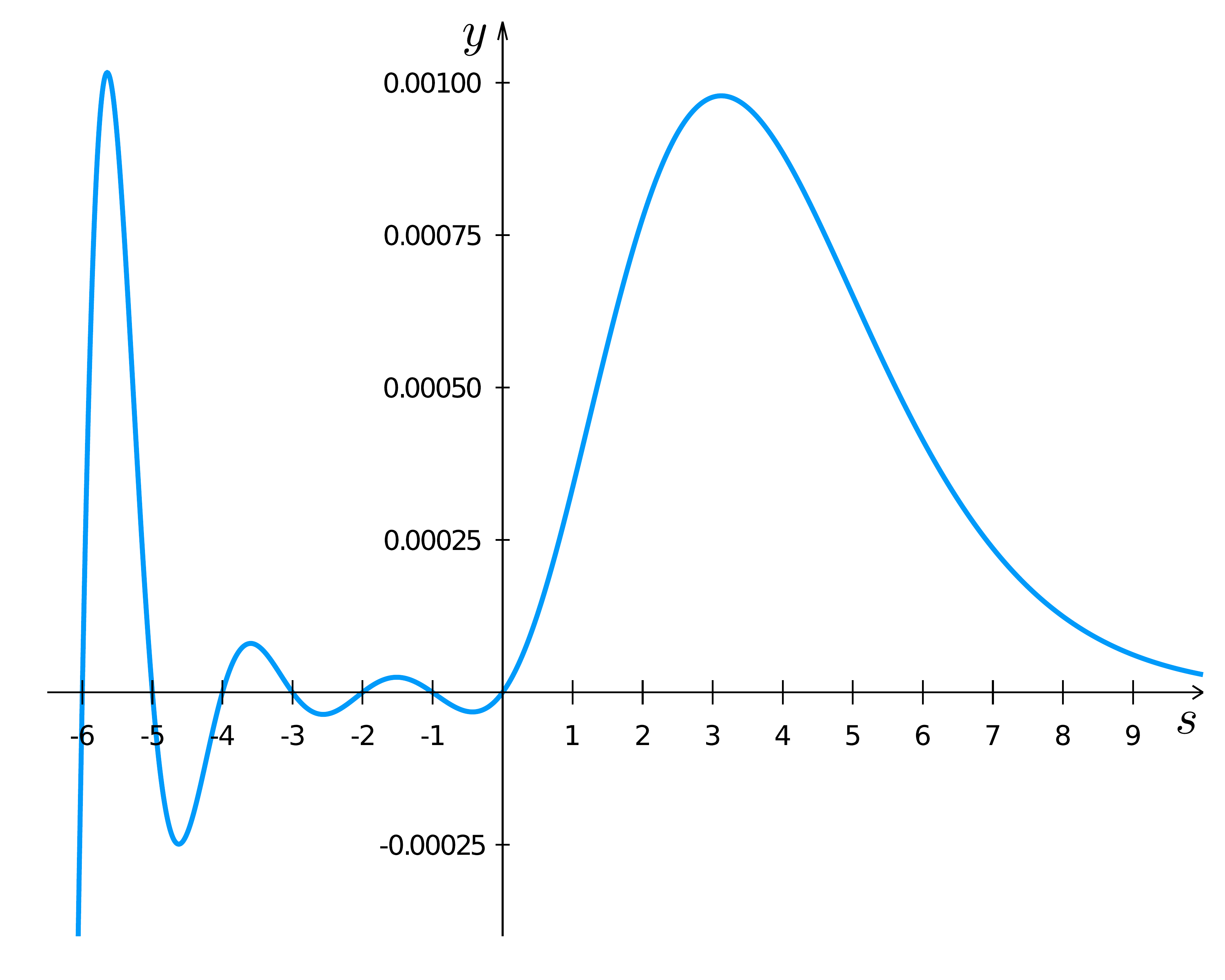}\\
\caption{Graph of $y=L_{3}^{\text{HCP}}(s) - L_{3}^{\text{FCC}}(s)$.}
\label{fig:delta_hcp_fcc}
\end{center}
\end{figure*}

\medskip
\noindent
The graphs appear to suggest the following:

\medskip
\noindent
{\bf Conjecture:}
\begin{align*}
L_{3}^{\text{HCP}}(s) > L_{3}^{\text{FCC}}(s) > 0 &\quad \text{for} \quad s \in \cdots \cup (-6,-5) \cup (-4,-3) \cup (-2,-1) \cup (3/2,\infty)
\\
\\
L_{3}^{\text{HCP}}(s) < L_{3}^{\text{FCC}}(s) < 0 &\quad \text{for} \quad s \in \cdots \cup (-5,-4) \cup (-3,-2) \cup (-1,0)
\end{align*}
and
$$
-1>L_{3}^{\text{HCP}}(s) > L_{3}^{\text{FCC}}(s)  \quad \text{for} \quad s \in (0,3/2).
$$

\appendix

\section{Formulas for special functions}
Many results for special functions and analytic number theory have been used in this work.
For clarity and ease of use, they are stated here along with references.
\subsection{The gamma function}
The gamma function may be defined for $s>0$ by
\begin{equation}
\label{gamma1}
\Gamma(s) = \int_0^\infty  t^{s-1}\, e^{-t}\,\ud t.
\end{equation}
By the change of variable $t=wx$ this can be rewritten in the useful form
\begin{equation}
\label{gamma2}
\frac{1}{w^s} = \frac{1}{\Gamma(s)} \int_0^\infty x^{s-1}\,e^{-wx}\,\ud x.
\end{equation}
See~\mbox{\cite[(1.1.18)]{aar}}.
\subsection{The modified Bessel function}
The following integral may be evaluated in terms of the modified Bessel function:
\begin{equation}
\label{bessel1}
\int_0^\infty x^{s-1}e^{-ax-b/x} \ud x = 2\left(\frac{b}{a}\right)^{s/2} K_s(2\sqrt{ab}).
\end{equation}
By the change of variable $x=u^{-1}$ it can be shown that
\begin{equation}
\label{bessel1.5}
K_{s}(z) = K_{-s}(z).
\end{equation}
When $s=1/2$ the modified Bessel function reduces to an elementary function:
\begin{equation}
\label{bessel2}
K_{1/2}(z) = \sqrt{\frac{\pi}{2z}}\,e^{-z}.
\end{equation}
The asymptotic formula holds:
\begin{equation}
\label{bessel3}
K_s(z) \sim \sqrt{\frac{\pi}{2z}} \, e^{-z} \quad \text{as} \quad z\rightarrow \infty, \quad (\,|\text{arg} \,z| < 3\pi/2).
\end{equation}
For all of these properties, see~\cite[pp. 223, 237]{aar} or~\cite[pp. 233--248]{temme}.

\subsection{Characters}
For an integer $n$, let $\chi_{-4}(n)$ and $\chi_{-3}(n)$ be defined by
\begin{equation}
\label{char1}
\chi_{-4}(n) = \sin(\pi n/2) = \begin{cases}
1 & \text{if $n \equiv 1 \pmod 4$,} \\
-1 & \text{if $n \equiv 3 \pmod 4$,} \\
0 & \text{otherwise}
\end{cases}
\end{equation}
and
\begin{equation}
\label{char2}
\chi_{-3}(n) = \frac{\sin(2\pi n/3)}{\sin(2\pi/3)} = \begin{cases}
1 & \text{if $n \equiv 1 \pmod 3$,} \\
-1 & \text{if $n \equiv 2 \pmod 3$,} \\
0 & \text{otherwise.}
\end{cases}
\end{equation}

\subsection{Theta functions}
The transformation formula for theta functions is~\cite[p. 119]{aar}, \cite[(2.2.5)]{borwein}:
\begin{equation}
\label{theta1}
\sum_{n=-\infty}^\infty e^{-\pi n^2t+2\pi ina} = \frac{1}{\sqrt{t}} \sum_{n=-\infty}^\infty e^{-\pi(n+a)^2/t}, \quad \text{assuming Re$(t)>0$.}
\end{equation}
We will need the special cases $a=0$ and $a=1/2$, which are
\begin{equation}
\label{theta1a}
\sum_{n=-\infty}^\infty e^{-\pi n^2t} = \frac{1}{\sqrt{t}} \sum_{n=-\infty}^\infty e^{-\pi n^2/t}
\end{equation}
and
\begin{equation}
\label{theta1b}
\sum_{n=-\infty}^\infty (-1)^ne^{-\pi n^2t} = \frac{1}{\sqrt{t}} \sum_{n=-\infty}^\infty e^{-\pi(n+\frac12)^2/t}
\end{equation}
respectively.
The sum of two squares formula is~\cite[(3.111)]{cooperbook}
\begin{equation}
\label{theta2}
\left(\sum_{j=-\infty}^\infty q^{j^2}\right)^2 = \sum_{j=-\infty}^\infty \sum_{k=-\infty}^\infty q^{j^2+k^2} = \sum_{N=0}^\infty r_2(N) q^N
\end{equation}
where
\begin{equation}
\label{theta3}
r_2(N) = \#\left\{j^2+k^2=N\right\} = 
\begin{cases} 1 & \text{if $N=0$}, \\ \\
\displaystyle{4\sum_{d|N} \chi_{-4}(d)} &\text{if $N\geq 1$,}
\end{cases}
\end{equation}
the sum being is over the positive divisors $d$ of $N$. For example,
\begin{align*}
r_2(18) &= 4\left(\chi_{-4}(1)+\chi_{-4}(2)+\chi_{-4}(3)+\chi_{-4}(6)+\chi_{-4}(9)+\chi_{-4}(18)\right) \\
&= 4 \left(1+0-1+0+1+0\right) = 4.
\end{align*}
By~\cite[(3.15) and (3.111)]{cooperbook} we also have
\begin{equation}
\label{theta4}
\left(\sum_{j=-\infty}^\infty q^{(j+\frac12)^2}\right)^2 = \sum_{N=0}^\infty r_2(4N+1)q^{(4N+1)/2}.
\end{equation}
\subsection{The cubic theta function}
The cubic analogues of the transformation formula are \cite[(2.2)]{borwein}, \cite[Cor. 5.19]{coopercubic}
\begin{equation}
\label{theta5}
\sum_{j=-\infty}^\infty\sum_{k=-\infty}^\infty e^{-2\pi(j^2+jk+k^2)t} = \frac{1}{\sqrt{3}} \sum_{j=-\infty}^\infty\sum_{k=-\infty}^\infty e^{-2\pi(j^2+jk+k^2)/3t} 
\end{equation}
and
\begin{equation}
\label{theta5a}
\sum_{j=-\infty}^\infty\sum_{k=-\infty}^\infty e^{-2\pi((j+\frac13)^2+(j+\frac13)(k+\frac13)+(k+\frac13)^2)t} 
= \frac{1}{\sqrt{3}} \sum_{j=-\infty}^\infty\sum_{k=-\infty}^\infty \omega^{j-k}e^{-2\pi(j^2+jk+k^2)/3t} 
\end{equation}
where $\omega=\exp(2\pi i/3)$ is a primitive cube root of unity.
The analogue of the sum of two squares result is~\cite[(3.124)]{cooperbook}
\begin{equation}
\label{theta2a}
\sum_{j=-\infty}^\infty \sum_{k=-\infty}^\infty q^{j^2+jk+k^2} = \sum_{N=0}^\infty u_2(N) q^N
\end{equation}
where
\begin{equation}
\label{theta3a}
u_2(N) = \#\left\{j^2+jk+k^2=N\right\} = 
\begin{cases} 1 & \text{if $N=0$}, \\ \\
\displaystyle{6\sum_{d|N} \chi_{-3}(d)} &\text{if $N\geq 1$,}
\end{cases}
\end{equation}
where the sum is again over the positive divisors $d$ of $N$. By~\cite[(3.18) and (3.124)]{cooperbook} we also have
\begin{equation}
\label{theta2b}
\sum_{j=-\infty}^\infty \sum_{k=-\infty}^\infty q^{(j+\frac13)^2+(j+\frac13)(k+\frac13)+(k+\frac13)^2} = \frac12\sum_{N=0}^\infty u_2(3N+1) q^{N+\frac13}
\end{equation}
which is the analogue of~\eqref{theta4}.

\subsection{The Riemann zeta function and $L$ functions}
The definitions are:
\begin{align}
\zeta(s) &= \sum_{j=1}^\infty \frac{1}{j^s} \label{zeta1} \\
L_{-4}(s) &= \sum_{j=1}^\infty \frac{\chi_{-4}(j)}{j^s} = 1-\frac{1}{3^s}+\frac{1}{5^s}-\frac{1}{7^s}+\cdots.   \label{zeta2}   \\
L_{-3}(s) &= \sum_{j=1}^\infty \frac{\chi_{-3}(j)}{j^s}  = 1-\frac{1}{2^s}+\frac{1}{4^s}-\frac{1}{5^s}+\frac1{7^s}-\frac{1}{8^s}+\cdots. \label{zeta3}
\end{align}
The function $\zeta(s)$ is the Riemann zeta function. It has a pole of order~$1$ at $s=1$, and in fact
\begin{equation}
\label{zetapole}
\lim_{s\rightarrow 1} (s-1)\zeta(s) = 1.
\end{equation}
This is a consequence of~\cite[(1.3.2)]{aar}. See also~\cite[p. 58]{temme}.
\\
We will require the functional equations
\begin{equation}
\label{fe1}
\pi^{-s/2} \Gamma (s/2)\zeta(s) = \pi^{-(1-s)/2} \Gamma((1-s)/2)\zeta(1-s)
\end{equation}
and
\begin{equation}
\label{fe2}
\pi^{-s} \Gamma\left(s\right)\zeta(s)L_{-4}(s) = \pi^{-(1-s)} \Gamma\left(1-s\right)\zeta(1-s)L_{-4}(1-s) 
\end{equation}
and the special values
\begin{equation}
\label{zetavalues}
\zeta(2) = \frac{\pi^2}{6}, \quad \zeta(0)= -\frac12,\quad \zeta(-1) = -\frac{1}{12}, \quad \zeta(-2)=\zeta(-4)=\zeta(-6)=\cdots = 0,
\end{equation}
\begin{equation}
\label{L4values}
L_{-4}(1) = \frac{\pi}{4},  \quad L_{-4}(0) = \frac12, \quad L_{-4}(-1) = L_{-4}(-3) = L_{-4}(-5) = \cdots = 0,
\end{equation}
and
\begin{equation}
\label{L3values}
L_{-3}(1) = \frac{\pi \sqrt{3}}{9}, \quad L_{-3}(0) =\frac13, \quad L_{-3}(-1) = L_{-3}(-3) = L_{-3}(-5) = \cdots = 0.
\end{equation}
See~\cite[Ch. 12]{apostol} or~\cite{zucker}.
Other results used are
\begin{equation}
\label{zeta4a}
\sum_{j=0}^\infty \frac{1}{(j+\frac12)^s} = (2^s-1)\zeta(s)
\end{equation}
\begin{equation}
\label{zeta4}
\sum_{j=1}^\infty \frac{(-1)^j}{j^s} = -(1-2^{1-s})\zeta(s)
\end{equation}

\begin{equation}
\label{zeta5}
{\sum_{j,k}}^{\prime} \;\;\; \frac{1}{(j^2+k^2)^s} = 4\zeta(s)L_{-4}(s)
\end{equation}

\begin{equation}
\label{zeta6}
{\sum_{j,k}}^{\prime} \;\;\; \frac{(-1)^{j+k}}{(j^2+k^2)^s}= -4(1-2^{1-s})\zeta(s)L_{-4}(s).
\end{equation}

\begin{equation}
\label{zeta7}
{\sum_{i,j}}^{\;\prime} \frac{1}{(i^2+ij+j^2)^{s}}
= 6\zeta(s) L_{-3}(s)
\end{equation}

\begin{equation}
\label{zeta8}
{\sum_{i,j}} \frac{1}{((i+\frac13)^2+(i+\frac13)(j+\frac13)+(j+\frac13)^2)^{s}}
= 3(3^s-1)\zeta(s) L_{-3}(s).
\end{equation}
The identities~\eqref{zeta4a} and~\eqref{zeta4} follow from the definition of~$\zeta(s)$ by series rearrangements.
For~\eqref{zeta5}, \eqref{zeta6} and \eqref{zeta7}, see (1.4.14), (1.7.5) and (1.4.16), respectively, of~\cite{BorweinEtAl}.
The identity~\eqref{zeta8} can be obtained by the method of Mellin transforms (e.g., see Appendix~A of~\cite{paper1}) starting with~\cite[(3.36)]{cooperbook}.

\section{Behaviour as $A\rightarrow 0^+$ and $A\rightarrow+\infty$}
We briefly consider the behaviour of the lattices in the limiting cases $A\rightarrow 0^+$ and $A\rightarrow+\infty$.
Some of the basis vectors become infinite in the limit, leaving a sublattice of lower dimension. We discuss each case
$A\rightarrow 0^+$ and $A\rightarrow+\infty$ both in terms of theta functions
and then in terms of the basis vectors.

First, consider the limit $A\rightarrow 0^+$. In the interval $0<A<1/3$ the theta function is
\begin{align*}
\theta(A;q) &= \sum_{i=-\infty}^\infty \sum_{j=-\infty}^\infty \sum_{k=-\infty}^\infty q^{g(A;i,j,k)} \\
&= \sum_{i=-\infty}^\infty \sum_{j=-\infty}^\infty \sum_{k=-\infty}^\infty q^{(A(i+j)^2+(j+k)^2+(i+k)^2)/4A}.
\end{align*}
As $A\rightarrow 0^+$ we have
$$
q^{(j+k)^2/4A} \rightarrow 0\quad\text{and}\quad q^{(i+k)^2/4A} \rightarrow 0
$$
unless $j=-k$ and $i=-k$, respectively. Hence,
\begin{align*}
\lim_{A\rightarrow 0^+}\theta(A;q)
&= \lim_{A\rightarrow 0^+} \sum_{k=-\infty}^\infty \left(\sum_{i=-k} \sum_{j=-k}q^{(A(i+j)^2+(j+k)^2+(i+k)^2)/4A} \right) \\
&= \lim_{A\rightarrow 0^+} \sum_{k=-\infty}^\infty q^{A(-k-k)^2/4A} \\
&=\sum_{k=-\infty}^\infty q^{k^2}.
 \end{align*}
This corresponds to the one-dimensional lattice with minimal distance~$1$. The kissing number is~2, which is in agreement with the other lattices
in the range \mbox{$0<A<1/3$}, as indicated in Table~1.
In terms of the basis vectors, from~\eqref{b1b2b3} we have
$$
\vect{b}_1 = \left(\frac12,\frac{1}{2\sqrt{A}},0\right)^\top, \quad
\vect{b}_2 = \left(\frac12,0,\frac{1}{2\sqrt{A}}\right)^\top, \quad
\vect{b}_3 = \left(0,\frac{1}{2\sqrt{A}},\frac{1}{2\sqrt{A}}\right)^\top.
$$
The only linear combinations $\vect{v} = i \vect{b}_1 + j \vect{b}_2 + k \vect{b}_3$ (for $i,j,k\in \mathbf{Z}$) that remain finite in the limit~\mbox{$A\rightarrow 0^+$} occur when $i=-k$, $j=-k$ in which case we
obtain
$$
\vect{v} = -k \vect{b}_1 -k \vect{b}_2 + k \vect{b}_3 = -k(1,0,0)^\top.
$$
That is, the limiting lattice is just the one-dimensional lattice consisting of integer multiples of $(1,0,0)^\top$.

Now consider the limit $A\rightarrow +\infty$. For $A>1$ the theta function is 
\begin{align*}
\theta(A;q) &= \sum_{i=-\infty}^\infty \sum_{j=-\infty}^\infty \sum_{k=-\infty}^\infty q^{g(A;i,j,k)} \\
&= \sum_{i=-\infty}^\infty \sum_{j=-\infty}^\infty \sum_{k=-\infty}^\infty q^{(A(i+j)^2+(j+k)^2+(i+k)^2)/2}.
\end{align*}
Since $q^{A(i+j)^2/2} \rightarrow 0$ as $A\rightarrow +\infty$ unless $i=-j$, it follows that
\begin{align*}
\lim_{A\rightarrow +\infty}\theta(A;q)
&=\sum_{j=-\infty}^\infty \sum_{k=-\infty}^\infty \left( \sum_{i=-j}  q^{(A(i+j)^2+(j+k)^2+(i+k)^2)/2}\right) \\
&=\sum_{j=-\infty}^\infty \sum_{k=-\infty}^\infty  q^{((j+k)^2+(-j+k)^2)/2} \\
&= \sum_{j=-\infty}^\infty \sum_{k=-\infty}^\infty q^{j^2+k^2}.
\end{align*}
This is the theta series for the two-dimensional square close packing lattice with minimal distance~$1$. The kissing number is~$4$, in agreement
with other values in the range~$A>1$ given by Table 1.
In terms of the basis vectors, from~\eqref{b1b2b3} we have
$$
\vect{b}_1 = \frac{1}{\sqrt{2}}(\sqrt{A},1,0)^\top, \quad
\vect{b}_2 = \frac{1}{\sqrt{2}}(\sqrt{A},0,1)^\top, \quad
\vect{b}_3 = \frac{1}{\sqrt{2}}(0,1,1)^\top.
$$
The only linear combinations $\vect{v} = i \vect{b}_1 + j \vect{b}_2 + k \vect{b}_3$ (for $i,j,k\in \mathbf{Z}$) that remain finite in the limit~\mbox{$A\rightarrow +\infty$}
occur when $i=-j$, in which case we obtain
$$
\vect{v} = -j \vect{b}_1 +j \vect{b}_2 + k \vect{b}_3 = \frac{1}{\sqrt{2}} \left[ j(0,-1,1)^\top + k(0,1,1)^\top\right].
$$
This is isomorphic to the two-dimensional square close packing lattice with minimal distance~$1$, rotated from the coordinate axes by 45 degrees.

\end{document}